\newcommand{\bdisp}{\begin{displaystyle}}
\newcommand{\edisp}{\end{displaystyle}}
\renewcommand{\Pr}{\mathbb{P}}
\newcommand{\Exp}{\mathbb{E}}
\newcommand{\from}{\leftarrow}
\renewcommand{\poly}{\mathrm{poly}}
\renewcommand{\polylog}{\mathrm{polylog}}
\newcommand{\Normal}{\mathcal{N}}
\newcommand{\Unif}{\mathrm{Unif}}
\renewcommand{\vec}[1]{\mathbf{#1}}
\newcommand{\Real}{\mathbb{R}}
\newcommand{\Rational}{\mathbb{Q}}
\newcommand{\argmin}{\operatorname*{\mathrm{arg\,min}}}
\renewcommand{\d}{\mathrm{d}}
\newcommand{\Diff}[2][]{\frac{\d#1}{\d#2}}
\newcommand{\Grad}{\nabla}
\newcommand{\Del}[2][]{\frac{\partial#1}{\partial#2}}
\newcommand{\ferr}{\epsilon}
\newcommand{\perr}{\delta}
\newcommand{\safety}{s}
\newcounter{nTheorems}
\numberwithin{nTheorems}{section}
\newtheorem{theorem}[nTheorems]{Theorem}
\newtheorem{corollary}[nTheorems]{Corollary}
\newtheorem{lemma}[nTheorems]{Lemma}
\newtheorem{proposition}[nTheorems]{Proposition}
\newtheorem*{proposition*}{Proposition}
\newtheorem*{lemma*}{Lemma}
\newtheorem{definition}[nTheorems]{Definition}
\newtheorem{example}[nTheorems]{Example}
\newtheoremstyle{break}
  {\topsep}{\topsep}%
  {}{}%
  {\bfseries}{}%
  {\newline}{}%
\theoremstyle{break}
\newtheorem{myalgorithm}{Algorithm}
\title{Optimizing Star-Convex Functions}
\author{
Jasper C.H. Lee
\quad
Paul Valiant\\ \ \\
Department of Computer Science\\
Brown University\\
\texttt{\{jasperchlee,paul\char`_valiant\}@brown.edu}
}
\begin{document}
\maketitle
\begin{abstract}
We introduce a polynomial time algorithm for optimizing the class of star-convex functions, under no Lipschitz or other smoothness assumptions whatsoever, and no restrictions except exponential boundedness on a region about the origin, and Lebesgue measurability. The algorithm's performance is polynomial in the requested number of digits of accuracy and the dimension of the search domain. This contrasts with the previous best known algorithm of Nesterov and Polyak which has exponential dependence on the number of digits of accuracy, but only $n^\omega$ dependence on the dimension $n$ (where $\omega$ is the matrix multiplication exponent), and which further requires Lipschitz second differentiability of the function~\cite{Nesterov:2006}. Star-convex functions constitute a rich class of functions generalizing convex functions, including, for example: for any convex (or star-convex) functions $f$, $g$, with global minima $f(\vec{0})=g(\vec{0})=0$, their \emph{power mean} $h(x)=\left(\frac{f(x)^p+g(x)^p}{2}\right)^{1/p}$ is star-convex, for \emph{any} real $p$, defining powers via limits as appropriate. Star-convex functions arise as loss functions in non-convex machine learning contexts, including, for data points $X_i$, parameter vector $\theta$, and any real exponent $p$, the loss function $h_{\theta,X}(\hat{\theta})=\left(\sum_i |(\hat{\theta}-\theta)\cdot X_i|^p\right)^{1/p}$, significantly generalizing the well-studied \emph{convex} case where $p\geq 1$. Further, for \emph{any} function $g>0$ on the surface of the unit sphere (including discontinuous, or pathological functions that have different behavior at rational vs. irrational angles), the star-convex function $h(x)=||x||_2\cdot g\left(\frac{x}{||x||_2}\right)$ extends the arbitrary behavior of $g$ to the whole space.

Despite a long history of successful gradient-based optimization algorithms, star-convex optimization is a uniquely challenging regime because 1) gradients and/or subgradients often do not exist; and 2) even in cases when gradients exist, there are star-convex functions for which gradients provably provide no information about the location of the global optimum. We thus bypass the usual approach of relying on gradient oracles and introduce a new randomized cutting plane algorithm that relies only on function evaluations. Our algorithm essentially looks for structure at all scales, since, unlike with convex functions, star-convex functions do not necessarily display simpler behavior on smaller length scales. Thus, while our cutting plane algorithm refines a feasible region of exponentially decreasing volume by iteratively removing ``cuts", unlike for the standard convex case, the structure to efficiently discover such cuts may not be found within the feasible region: our novel star-convex cutting plane approach discovers cuts by sampling the function exponentially far outside the feasible region.

We emphasize that the class of star-convex functions we consider is as unrestricted as possible: the class of Lebesgue measurable star-convex functions has theoretical appeal, introducing to the domain of polynomial-time algorithms a huge class with many interesting pathologies.
We view our results as a step forward in understanding the scope of optimization techniques beyond the garden of convex optimization and local gradient-based methods.

\end{abstract}
\thispagestyle{empty}\setcounter{page}{0}
\newpage
\section{Introduction}

Optimization is one of the most influential ideas in computer science, central to many rapidly developing areas within computer science, and also one of the primary exports to other fields, including operations research, economics and finance, bioinformatics, and many design problems in engineering. Convex optimization, in particular, has produced many general and robust algorithmic frameworks that have each become fundamental tools in many different areas: linear programming has become a general modeling tool, its simple structure powering many algorithms and reductions; semidefinite programming is an area whose scope is rapidly expanding, yielding many of the best known approximation algorithms for optimizing constraint satisfaction problems~\cite{chlamtac-tulsiani}; convex optimization generalizes both of these and has introduced powerful optimization techniques including interior point and cutting plane methods. Our developing understanding of optimization has also led to new algorithmic design principles, which in turn leads to new insights into optimization. Recent progress in algorithmic graph theory has benefited enormously from the optimization perspective, as many recent results on max flow/min cut~\cite{Madry:2013,Kelner:2014,Lee:2014,Christiano:2011,Sherman:2013}, bipartite matching~\cite{Madry:2013}, and Laplacian solvers~\cite{Vishnoi:2012,Christiano:2011,Spielman:2004} have made breakthroughs that stem from developing a deeper understanding of the characteristics of convex optimization techniques in the context of graph theory. These successes motivate the quest for a deeper and broader understanding of optimization techniques: 1) to what degree can convex optimization techniques be extended to non-convex functions; 2) can we develop general new tools for tackling non-convex optimization problems; and 3) can new techniques from non-convex optimization yield new insights into convex optimization?

As a partial answer to the first question, gradient descent---perhaps the most natural optimization approach---has had enormous success recently in a variety of practically-motivated non-convex settings, sometimes with provable guarantees. The method and its variants are the de facto standard for training (deep) neural networks, a hot topic in high dimensional non-convex optimization, with many recent practical results (e.g.~\cite{Imagenet,AlphaGo}). Gradient descent can be thought of as a ``greedy" algorithm, which repeatedly chooses the most attractive direction from the local landscape. The efficacy of gradient descent algorithms relies on local assumptions about the function: for example, if the first derivative is Lipschitz (slowly varying), then one can take large downhill steps in the gradient direction without worrying that the function will change to going uphill along this direction. Thus the convergence of gradient descent algorithms typically depends on a Lipschitz parameter (or other smoothness measure), and conveniently does not depend on the dimension of the search space. Many of these algorithms converge to within $\ferr$ of a local optimum in time $poly(1/\ferr)$, with additional polynomial dependence on the Lipschitz constant or other appropriate smoothness guarantee~\cite{Nesterov:AGD,Hazan:2015}. In cases where all local optima are global optima, then one has global convergence guarantees.

While one intuitively expects gradient descent algorithms to always converge to a local minimum, in this paper we study the optimization of a natural generalization of convex functions where, despite the global optimum being the only stationary point/local minimum for every function in this class, provably no variant of gradient descent converges in polynomial time. The class of star-convex functions, which we define below, includes many functions of both practical and theoretical interest, both generalizing common families of convex functions to wider parameter regimes, and introducing new ``pathologies" not found in the convex case. We show, essentially, how to make a gradient-based cutting plane algorithm ``robust" to many new pathologies, including lack of gradients or subgradients, long narrow ridges and rapid oscillation in directions transverse to the global minimum, and, in fact, almost arbitrary discontinuities. This challenging setting gives new insights into what fundamentally enables cutting plane algorithms to work, which we view as progress towards answering questions 2 and 3 above.

\subsection{Star-convex functions}
This paper focuses on the optimization of \emph{star-convex} functions, a particular class of (typically) non-convex functions that includes convex functions as a special case.
We define these functions as follows, based on the definition in Nesterov and Polyak~\cite{Nesterov:2006}.

\begin{definition}[Star-convex functions]
A function $f : \Real^n \to \Real$ is \emph{star-convex} if there is a global minimum $x^\ast \in \Real^n$ such that for all $\alpha \in [0,1]$ and $x \in \Real^n$,
$$ f(\alpha x^\ast + (1-\alpha)x) \le \alpha f(x^\ast) + (1-\alpha) f(x) $$
We call $x^\ast$ the star center of $f$.
For convenience, in the rest of the paper we shall refer to the minimum function value as $f^\ast$.
\end{definition}

See Appendix~\ref{sec:examples} for a variety of basic constructions and examples of star-convex functions.

Intuitively, if we visualize the objective function as a landscape, star-convexity means that the global optimum is ``visible" from every point---there are no ``ridges" on the way to the global optimum, but there could be many ridges in transverse directions. Since the global optimum is always visible in a downhill direction from every point, gradient descent methods would \emph{seem} to be effective.
Counterintuitively, these methods all fail in general. In Section~\ref{sect:Hardness} we demonstrate that standard gradient descent and cutting plane methods fail in the absence of Lipschitz guarantees.

\begin{figure}
\centering
\begin{subfigure}[b]{.45\textwidth}
\centering
\includegraphics[width=5cm, keepaspectratio]{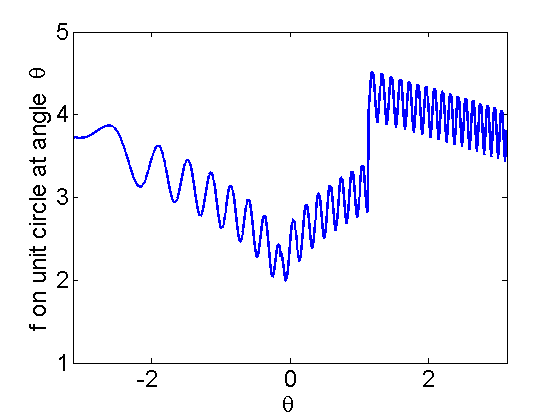}
\caption{$g$ defined on the unit circle}
\label{Fig:fig1a}
\end{subfigure}
\quad
\begin{subfigure}[b]{.45\textwidth}
\centering
\includegraphics[width=5cm, keepaspectratio]{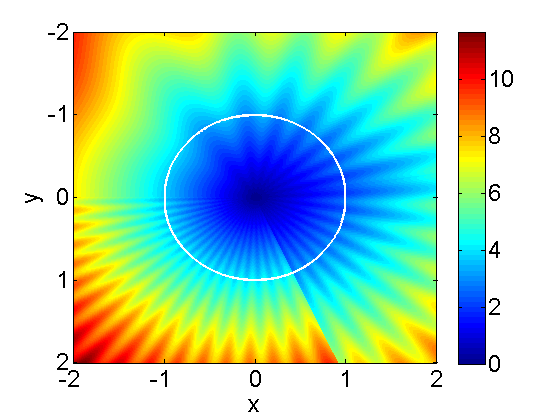}
\caption{Linear extension of $g$ to $f$}
\label{Fig:fig1b}
\end{subfigure}
\caption{An example star-convex function $f$ defined by linearly extrapolating an \emph{arbitrary} positive function $g$ defined on the unit circle (in white).}
\label{Fig:fig1}
\end{figure}

One broad class of star-convex functions that gives a good sense of the scope of this definition is constructed by the following process: 1) pick an arbitrary positive function $g(\theta)$ on the unit circle (see Figure~\ref{Fig:fig1a}) which may be discontinuous, rapidly oscillating, or otherwise badly behaved; and 2) linearly extend this function to a function $f$ on the entire plane, about the value $f(0)=0$ (see Figure~\ref{Fig:fig1b}).

The name ``star-convex'' comes from the fact that each sublevel set (the set of $x$ for which $f(x)<c$, for some $c$) is ``star-shaped".

At any point at which a gradient or subgradient exists, the star center (global optimum) lies in the halfspace opposite the (sub)gradient. However, even for the simple example of Figure~\ref{Fig:fig1}, gradients do not exist for angles $\theta$ at which $g$ is discontinuous, and subgradients do not exist at most angles, including those angles where $g$ is a local maxima with respect to $\theta$. Further, even for differentiable star-convex functions, gradients can be misleading, since a rapidly oscillating $g$ implies that gradients typically point in the transverse direction, nearly orthogonal to the direction of the star center.

While it is often standard to design algorithms assuming one has access to an oracle that returns both function values and gradients, we instead only assume access to the function value: even in the case when gradients exist, it is unclear whether they are algorithmically helpful in our setting. (See Section~\ref{sect:Hardness} for details.) Indeed, we pose this as an open problem: under what assumptions (short of the Lipschitz guarantees on the second derivative of Nesterov and Polyak~\cite{Nesterov:2006}), can one meaningfully use a gradient oracle to optimize star-convex functions?

In this paper, we show that, assuming only Lebesgue measurability and an exponential bound on the function value within a large ball, our algorithm optimizes a star-convex function in $\polylog(1/\ferr)$ time where $\ferr$ is the desired accuracy in function value.

\begin{theorem}
\label{thm:InformalIntro} (Informal)
Given evaluation oracle access to a Lebesgue measurable star-convex function $f : \Real^n \to \Real$ and an error parameter $\ferr$, with the guarantee that $x^\ast$ is within radius $R$ of the origin, Algorithm~\ref{alg:Ellipsoid} returns an estimate $f_0$ of the minimum function value such that $|f_0 - f^\ast| \le \ferr$ with high probability in time $\poly(n, \log \frac{1}{\ferr}, \log R)$.
\end{theorem}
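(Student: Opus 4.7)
The plan is to build a randomized cutting plane algorithm that maintains an ellipsoid $E \supseteq \{x^*\}$ and shrinks its volume iteratively. I would initialize $E$ to be the ball of radius $R$ around the origin, which is valid by hypothesis. At each round, produce a halfspace $H$ through (or near) the centroid of $E$ that with high probability contains $x^*$, then replace $E$ by the minimum volume ellipsoid enclosing $E \cap H$. Standard ellipsoid method arithmetic shrinks the volume by a factor $1 - \Omega(1/n)$ per round, so after $\poly(n, \log R, \log(1/\ferr))$ rounds the ellipsoid has exponentially tiny volume, and its centroid combined with sampling along rays will yield both an upper and a lower bound on $f^*$ within $\ferr$.

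The heart of the proof is producing the cut without a gradient oracle and without any Lipschitz assumption. Following the introductory remarks, I would sample $f$ on random rays emanating from the centroid $c$ at a scale $\lambda$ much larger than the diameter of $E$. For a sample $c + \lambda u$, star-convexity applied along the segment from $c + \lambda u$ toward the unknown $x^*$ constrains $f$ on the segment to lie below a linear interpolation of $f^*$ and $f(c + \lambda u)$; this constraint can be probed by further sampling on a fine grid along the segment. Directions $u$ roughly aligned with $x^* - c$ then display a measurable signature distinguishing them from transverse directions, along which star-convexity imposes no nontrivial bound. Since the correct scale is not known in advance, I would loop $\lambda$ over a geometric progression ranging from $O(\operatorname{diam}(E))$ up to an exponentially large value, which is affordable because the progression has only $\poly(\log R, n)$ terms.

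The main obstacle will be certifying that, with high probability against completely arbitrary Lebesgue-measurable transverse behavior of $f$, the sampling procedure identifies a halfspace truly containing $x^*$. I would tackle this with a concentration argument on uniformly random directions on the sphere, combined with a pigeonhole over the geometric scales: at some scale $\lambda$, the radial star-convexity constraints dominate the arbitrary transverse fluctuations and induce a detectable hemisphere asymmetry near any candidate cut through $c$. A union bound over all rounds and scales keeps the total failure probability polynomially small. Finally, the exponential boundedness hypothesis converts the shrunken ellipsoid into an $\ferr$-accurate estimate of $f^*$: the upper bound comes from the minimum of $f$ over the sampled centroids, and the lower bound from applying the star-convex linear interpolation inequality at the far samples, forcing $f^*$ to exceed the observed values up to error terms that vanish with the diameter.
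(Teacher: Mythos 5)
Your high-level framework (ellipsoid method driven by randomized cuts, with an exponential sweep of scales) matches the paper's, but the two steps that carry all the difficulty are not actually established, and as described each would fail. First, the cut-finding mechanism: you propose to detect directions $u$ aligned with $x^*-c$ by sampling along the segment from $c+\lambda u$ ``toward the unknown $x^*$'' and checking the star-convexity interpolation constraint. But that segment is unknown precisely because $x^*$ is unknown; star-convexity constrains $f$ only along rays through $x^*$, and along a ray from $c$ in direction $u$ it imposes no constraint at all unless the ray happens to pass through $x^*$. For the canonical example $f(x)=\|x\|_2\, g(x/\|x\|_2)$ with $g$ arbitrary on the sphere, the values $f(c+\lambda u)$ carry no signature whatsoever of the direction of $x^*$, so no concentration argument over random $u$ can recover a valid halfspace from raw function values on rays. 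The paper's actual mechanism is quite different: it forms the Gaussian-blurred truncated logarithm $L_z$, proves (Lemma~\ref{lem:CutLB}) that the radial derivative of its expectation is at least a probability term minus three explicit confounding derivatives, and then spends most of Section~\ref{sec:cuts} bounding those confounders via a double-sampling Markov argument and an average over an exponential range of widths $\sigma_\top$ --- with the fallback (Lemmas~\ref{lem:tail} and~\ref{lem:Victory}) that if the probability term is too small the current Gaussian is already an acceptable answer. None of this structure is present or replaceable by your pigeonhole-over-scales sketch. (Your ``fine grid'' sampling also conflicts with the oracle model: a measurable star-convex function can be adversarial on any measure-zero grid, which is why the paper uses a weak \emph{sampling} evaluation oracle.)

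Second, even granting valid cuts, your termination argument is wrong for star-convex functions: exponentially small \emph{volume} does not give an exponentially small \emph{diameter}, and unlike the convex case a thin ellipsoid gives no control on $f$ (the function does not flatten on narrow regions). The paper must therefore ``lock'' any axis shorter than $\tau$ and guarantee that every subsequent cut is orthogonal to the locked axes (Proposition~\ref{prop:Alg1Correctness}, Lemma~\ref{lem:AxesLB}), which is what forces all axes below $\tau$ after $m$ iterations and lets Lemma~\ref{lem:TinyRadius} convert the small ball into an $\ferr$-accurate value. Producing a cut restricted to the non-thin subspace is exactly where the exponential search \emph{outside} the ellipsoid in the thin directions is needed; your proposal does not address why such restricted cuts exist or how to find them.
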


In previous work, Nesterov and Polyak~\cite{Nesterov:2006} introduced a new adaptation of Newton's method to find local minima in functions that are twice-differentiable, with Lipschitz continuity of their second derivatives. For the class of Lipschitz-twice-differentiable star-convex functions, they show that their algorithm converges to within $\ferr$ of the global optimum in time $O(\frac{1}{\sqrt{\ferr}})$, using star convexity to lower bound the amount of progress in each of their optimization steps.

Our results are stronger in two significant senses: 1) as opposed to assuming Lipschitz twice-differentiability, we make \emph{no continuity assumptions whatsoever}, optimizing over an essentially arbitrary measurable function within the class of star-convex functions; 2) while the algorithm of~\cite{Nesterov:2006} requires exponential time to estimate the optimum to $k$ digits of accuracy, our algorithm is \emph{polynomial in the requested number of digits of accuracy}.

The reader may note that the complexity of our algorithm does depend polynomially on the number of dimensions $n$ of the search space, whereas the Nesterov-Polyak algorithm uses a number of calls to a Hessian oracle that is \emph{independent} of the dimension $n$.
However, our dependency on the dimension $n$ is necessary in the absence of Lipschitz guarantees, even for convex optimization~\cite{Nemirovskii:1983}.

We further point out an important distinction between \emph{star-convex} optimization and \emph{star-shaped} optimization.
A star-shaped function is defined similarly to a star-convex function, except the star center is allowed to be located away from the global minimum.
Certain \NP-hard optimization problems, including max-clique can be rephrased in terms of star-shaped optimization~\cite{MaxClique}, and the problem is in general impossible without continuity guarantees, for the global minimum may be hidden along a single ray from the star center that is discontinuous from the rest of the function. In this sense, our star-convex optimization algorithm narrowly avoids solving an \NP-hard optimization problem.

\subsection{Applications of Star-Convex Functions}

We begin with the following concrete example: the function $(|x|^p+|y|^p)^{1/p}$ is convex for $p\geq 1$, yet is star-convex for \emph{all} real $p$, positive or negative, taking limits as appropriate for $p=0$. We generalize this example significantly below.

\emph{Empirical risk minimization} is a central technique in machine learning~\cite{Vapnik:2013}, where, given training data $x_i$ with labels $y_i$, we seek a hypothesis $h$ so as to minimize \[\frac{1}{m}\sum_{i=1}^m L(h(x_i),y_i),\] where $L$ is a \emph{loss function} that describes the penalty for misprediction, on input our prediction $h(x_i)$ and the true answer $y_i$. We take the hypothesis $h$ to be a linear function (this setting includes kernel methods that preprocess $x_i$ first and then apply a linear function). If the loss function $L$ is convex, then finding the optimal hypothesis $h$ is a convex optimization problem, which is widely used in practice: for exponent $p\geq 1$, we let $L(\hat{y}_i,y_i)=|\hat{y}_i-y_i|^p$ and thus aim to optimize the convex function \begin{equation}\label{eq:ERM-p}\argmin_{h}\sum_{i=1}^m |h\cdot x_i - y_i|^p.\end{equation}

In this paper we draw attention to the interesting regime of Equation~\ref{eq:ERM-p} where $p<1$. This regime is not accessible with standard techniques. It is straightforward to verify, however, that the $(1/p)$th power of the objective function of Equation~\ref{eq:ERM-p} is star-convex, and thus the main result of our paper yields an efficient optimization algorithm. Slightly more generally:

\begin{corollary}
  For a loss function $L(\hat{y}_i,y_i)$ such that there exists a real number $p$ for which $L(\cdot,\cdot)^p$ is a star-convex function of its first argument, the following empirical risk minimization problem can be optimized to accuracy $\epsilon$ in time $poly(n,\log 1/\epsilon)$ by inputting its $(1/p)$th power to the main algorithm of this paper:
  \[\argmin_{h}\frac{1}{m}\sum_{i=1}^m L(h(x_i),y_i),\]
\end{corollary}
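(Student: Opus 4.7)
The plan is to reduce the stated ERM problem to a single star-convex optimization and invoke Theorem~\ref{thm:InformalIntro}. The natural surrogate is $G(h) := \left(\frac{1}{m}\sum_{i=1}^{m} L(h\cdot x_i, y_i)^p\right)^{1/p}$, the generalized $p$th-power-mean of the per-example losses, which is what the corollary calls ``the $(1/p)$th power of the ERM.'' For nonzero real $p$ and nonnegative $L$, this is a monotone transformation of $\frac{1}{m}\sum_i L(h\cdot x_i, y_i)^p$, so an $\epsilon$-accurate estimate of $G^\ast$ from Theorem~\ref{thm:InformalIntro} translates, by raising to the appropriate power, into an accurate estimate of the ERM value in time $\poly(n, \log(1/\epsilon))$.

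The technical heart of the argument is verifying that $G$ is star-convex. This proceeds in two steps. First, for each $i$, the function $F_i(h) := L(h\cdot x_i, y_i)^p$ is star-convex in $h$: given the hypothesis that $L(\cdot, y_i)^p$ is star-convex in its first argument with star center at some $\hat{y}_i^\ast$, and that the linear map $h \mapsto h \cdot x_i$ carries affine combinations to affine combinations, the star-convexity inequality for $F_i$ follows by direct substitution, with star center any $h^\ast$ satisfying $h^\ast \cdot x_i = \hat{y}_i^\ast$. Second, I would invoke the power-mean construction advertised in the abstract and developed in the basic-constructions appendix: the $p$th power mean of a family of nonnegative star-convex functions sharing a common star center at which they vanish is itself star-convex. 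Applying this to $F_1, \ldots, F_m$ yields star-convexity of $G$. Measurability and the exponential-boundedness hypothesis of Theorem~\ref{thm:InformalIntro} are inherited from the corresponding hypotheses on $L$.

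The main obstacle is the existence of a common star center $h^\ast$ simultaneously satisfying $h^\ast \cdot x_i = \hat{y}_i^\ast$ for all $i$. Generically, in the overdetermined regime (more data points than parameters), this linear system has no solution, so the outline above fails to produce a common center directly. The corollary should then be read either as restricted to the interpolating regime where such an $h^\ast$ exists, or as implicitly invoking a more general form of the power-mean lemma that accommodates a common star center at which the summands $F_i$ need not simultaneously vanish; establishing that more general form, together with the analogous sign-reversal issue for $p<0$ (where $(\cdot)^{1/p}$ is order-reversing and one must ensure the summands remain strictly positive), is where the bulk of the remaining proof effort would go.
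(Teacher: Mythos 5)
Your proposal matches the paper's intended argument: the paper offers no explicit proof of this corollary, asserting only that it is ``straightforward to verify'' via the power-mean construction (item 8/11 of Appendix~\ref{sec:examples}), which is exactly the reduction you carry out. The caveats you flag are real but are ones the paper itself implicitly concedes --- the common star center exists only under the realizability assumption the paper makes explicit in its stochastic-optimization discussion (``we assume there is a true hypothesis $h$ such that each labeled example is correctly predicted''), and the $p<0$ sign-reversal is acknowledged in the parenthetical remark immediately following the corollary --- so your writeup is, if anything, more careful than the source.
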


(We note that negative $p$ is a somewhat peculiar regime in the above corollary, where running a minimization algorithm on the $(1/p)$th power of the loss would actually yield a global \emph{maximum} instead of minimum; we include this case here for completeness' sake.)

The regime where $p\in (0,1)$ has the structure that, when one is far from the right answer, small changes to the hypothesis do not significantly affect performance, but the closer one gets to the true parameters, the richer the landscape becomes. One of many interesting settings with this behavior is biological evolution, where ``fit" creatures have a rich landscape to traverse, while drastically unfit creatures fail to survive. A paper by one of the authors proposed star-convex optimization as a regime where evolutionary algorithms might be unexpectedly successful. This current work finds an affirmative answer to an open question raised there by showing the first polynomial time algorithm for optimizing this general class of functions. (The previous results by Nesterov and Polyak~\cite{Nesterov:2006} fail to apply to this setting because the objective function has regions that behave---for some parameter regimes---like $\sqrt{|x|}$, which is not Lipschitz.)

\begin{corollary}[Extending Theorem 4.5 of~\cite{Valiant:2014}]
There is a single mutation algorithm under which for any $p>0$, defining the loss function $L(\hat{y},y)=|\hat{y}-y|^p$, the class of constant-degree polynomials from $\mathbb{R}^n\rightarrow\mathbb{R}^m$ with bounded coefficients is evolvable with respect to all distributions over the radius $r$ ball.
\end{corollary}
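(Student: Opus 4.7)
The plan is to reduce the claim to the main optimization theorem of the paper. I parameterize degree-$d$ polynomials from $\mathbb{R}^n$ to $\mathbb{R}^m$ by their coefficient vectors $\theta \in \mathbb{R}^N$, where $N = O(m n^d) = \poly(n)$ since $d$ is constant; the bounded-coefficient hypothesis confines the feasible set to a ball of some radius $R = \poly(n)$. For a fixed target polynomial with coefficient vector $\theta^*$ in the same class, the expected loss on a distribution $D$ supported in the radius-$r$ ball is
$$F(\theta) \;=\; \Exp_{x \sim D}\!\left[|h_\theta(x) - h_{\theta^*}(x)|^p\right].$$
Because $h_\theta$ is linear in $\theta$, we have $h_\theta - h_{\theta^*} = h_{\theta - \theta^*}$, so $F$ is $p$-homogeneous in $\theta - \theta^*$ and $F^{1/p}$ is $1$-homogeneous. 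Thus $F^{1/p}(\theta) = \|\theta - \theta^*\|_2 \cdot g\!\left(\frac{\theta - \theta^*}{\|\theta - \theta^*\|_2}\right)$ for the nonnegative function $g(u) = (\Exp_{x \sim D}|h_u(x)|^p)^{1/p}$ on the unit sphere, placing $F^{1/p}$ squarely in the prototypical star-convex family of Figure~\ref{Fig:fig1} with star center $\theta^*$ and $f^\ast = 0$.

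Next, I would apply Theorem~\ref{thm:InformalIntro} to $F^{1/p}$: Lebesgue measurability is immediate because $h_\theta(x)$ is smooth in both arguments, and the exponential boundedness hypothesis holds trivially because for $\|\theta\|_2 \le R$ and $\|x\|_2 \le r$ the value $|h_\theta(x)|$ is polynomially bounded in $R$ and $r$. The theorem then returns, in time $\poly(n, \log(1/\ferr))$, a vector $\hat\theta$ with $F(\hat\theta)^{1/p} \le \ferr$, equivalently $F(\hat\theta) \le \ferr^p$.

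Since Valiant's evolvability model grants the learner only sample access to $D$, the third step is to replace each exact evaluation of $F$ inside the main algorithm with an empirical average $\widehat F(\theta) = \frac{1}{M} \sum_{i=1}^{M} |h_\theta(x_i) - y_i|^p$ over fresh samples drawn from $D$. By a Hoeffding bound, using the fact that each integrand is bounded by $\poly(R,r)^p$, with $M = \poly(n, 1/\ferr, 1/\delta)$ samples per query and a union bound over the $\poly(n, \log(1/\ferr))$ queries made by the main algorithm, all empirical evaluations are simultaneously within $\ferr^p/2$ of their true values with probability $1 - \delta$. Packaging this randomized procedure as a single (one-shot) mutation operator, applied uniformly for every $p > 0$ supplied as a parameter, yields evolvability in the sense of~\cite{Valiant:2014}.

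The main obstacle I anticipate is verifying that the cutting-plane procedure underlying Theorem~\ref{thm:InformalIntro} remains correct under such additive perturbations of its function oracle; because the algorithm locates the star center by comparing values sampled across exponentially separated scales, one needs a quantitative version of the theorem that tracks how loose an evaluation tolerance suffices to preserve each successful cut. A secondary subtlety is confirming that the ``single mutation'' clause in the evolvability definition of~\cite{Valiant:2014} permits a batch mutation that replaces the current hypothesis wholesale rather than demanding an incremental sequence of local moves; this is already the framework used for Theorem 4.5 of~\cite{Valiant:2014}, which the present corollary is stated as extending.
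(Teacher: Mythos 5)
Your core reduction is the same one the paper intends (the paper gives no standalone proof of this corollary; its justification is spread across the ERM discussion, item 11 of Appendix~\ref{sec:examples}, and the stochastic-optimization subsection): linearize the degree-$d$ polynomial in its $N=O(mn^d)$ coefficients, observe that $F^{1/p}(\theta)=\bigl(\Exp_x\|h_{\theta-\theta^*}(x)\|^p\bigr)^{1/p}$ is $1$-homogeneous about $\theta^*$ and hence star-convex of the ``linear extension of a function on the sphere'' type, and feed it to Algorithm~\ref{alg:Ellipsoid}. Where you diverge is in handling the stochasticity, and your anticipated ``main obstacle'' is a non-issue in the paper's framework: the weak sampling evaluation oracle (Definition~\ref{def:evaluation}) is \emph{defined} to return values only up to additive error $\ferr$, so the entire analysis of Algorithm~\ref{alg:SingleCut} is already carried out under exactly the perturbation you are worried about; your empirical estimator $\widehat F(\theta)^{1/p}$ is simply one implementation of that oracle, and no new quantitative robustness statement is needed. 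The paper's own route (Section 1.5) is lighter still: since each per-example loss $L(h_\theta(x_i),y_i)^{1/p}$ is itself star-convex with the same star center and optimum value $0$, one can answer each oracle call with a \emph{single} fresh sample, because every inequality in Section~\ref{sec:cuts} is an expectation over the Gaussian blur and holds pointwise for each fixed example, hence in expectation over examples; this avoids your $\poly(1/\ferr)$ per-query sample cost (harmless for evolvability, where $\poly(1/\epsilon)$ is allowed, but it forfeits the $\polylog(1/\ferr)$ dependence). Two small points to tighten: Theorem~\ref{thm:InformalIntro} returns a Gaussian region on which $f\le f^*+\ferr$ with probability $1-\perr$, not a single certified $\hat\theta$, so you should sample your hypothesis from that region and absorb the $\perr$ into the evolvability failure probability; and when converting additive error in $F$ to additive error in $F^{1/p}$, the two regimes $p\ge 1$ (H\"older, error $\delta^{1/p}$) and $p<1$ (mean value theorem on the bounded range, error $O(\delta)$) should be treated separately, though both stay polynomial for constant $p$.
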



\subsection{Need for Novel Algorithmic Techniques}
\label{sect:Hardness}

Before we outline our approach to optimizing star-convex functions, we demonstrate the need for novel algorithmic approaches by explaining how a wide variety of standard techniques fail on this challenging class of functions. We start by explaining how simple star-convex functions such as $(\sqrt{|x|}+\sqrt{|y|})^2$ confound gradient descent, cutting plane and related approaches, and end with a pathological example with information-theoretic guarantees about its hardness to optimize, even given arbitrarily accurate first-order (gradient) oracle access.

Consider the star-convex function $(\sqrt{|x|}+\sqrt{|y|})^2$, which can be thought of essentially as $\sqrt{|x|}+\sqrt{|y|}$. This function has unique global minimum (and star center) at the origin. Gradients of this function go to infinity as either $x$ or $y$ goes to 0, thus the function essentially has deep canyons along both the $x$ and $y$ axes. Intuitively, this function is challenging to optimize because, despite the origin lying in a ``downhill" direction from every point, gradient descent will typically converge to \emph{one} of the axes first, at which point the gradient has near-infinite magnitude in the direction of the nearest axis, and this direction is thus near-orthogonal to the direction of the origin. In short, many variants of gradient descent will have the following behavior: rapidly jump towards one of the two axes, and then fail to make significant further progress as the search point oscillates around that axis. This is in part a reflection of the fact that the gradient of $(\sqrt{|x|}+\sqrt{|y|})^2$ is \emph{not} Lipschitz, and in fact varies arbitrarily rapidly as $(x,y)$ converges towards either axis; since these axes are ``canyons" in the search space, typical algorithms will spend a disproportionate amount of their time stumbling around this badly-behaved region.

There are many variants of gradient descent constructed to tackle optimization settings that are challenging for ``vanilla" gradient descent. One recent promising line of work concerns \emph{normalized} gradient descent~\cite{Hazan:2015}, which is designed to work well despite regions where the magnitude of the gradient is unusually small or large. However, the analysis techniques rely on the ``strict local quasi-convexity" property, which says that there is a global constant $\epsilon$ such that the downhill halfspace induced by any gradient contains a ball of radius $\epsilon$ about the global optimum; this property does not apply even to the simple function $(\sqrt{|x|}+\sqrt{|y|})^2$, where halfspaces cut arbitrarily close to the origin, the closer one gets to either the $x$ or $y$ axis.

As an alternative to gradient descent methods, one might consider a gradient-based cutting plane algorithm. Cutting plane methods work by iteratively cutting a large search space into ever-smaller regions until they converge to a tiny region guaranteed to contain the global optimum. Cutting plane algorithms, even for convex optimization, are known to typically lead to ill-conditioned behavior, where one dimension of the search region becomes much smaller than the overall diameter of the region, leading to bad numerical properties, difficulty finding and manipulating gradients, etc. These problems become even worse in our star-convex setting, even for the simple function $(\sqrt{|x|}+\sqrt{|y|})^2$: as the search space converges around one of the axes, the gradients become even more strongly oriented towards that axis, yielding cuts that make the already thin region even thinner, instead of cutting in a more productive transverse direction. Indeed, up to numerical precision, the computed gradients may be found to point exactly towards the nearest axis, making cuts in the transverse direction impossible, and thus halting the overall progress of the algorithm indefinitely.

Having seen simple examples showing how standard techniques cannot optimize star-convex functions, we now present a more sophisticated and pathological example which proves that, even with access to a first-order oracle and the assumptions of infinite differentiability and boundedness on a region around the origin, no \emph{deterministic} algorithm can efficiently optimize star-convex functions without Lipschitz guarantees, motivating the somewhat unusual randomized flavor of the algorithm of this paper.
In particular, we show that, for any deterministic polynomial time algorithm, there is an \emph{infinitely differentiable} star-convex function on $\Real^2$ with a unique global minimum near the origin such that for all queried points 1) the returned function value is always 1 and 2) the returned gradient depends \emph{only} on the $y$-coordinate.
The existence of such a function implies the inability of the algorithm to find the $x$-coordinate of the star center.

\begin{example}\label{ex:deterministic}

Given a deterministic polynomial time optimization algorithm $A$, we construct a star-convex function $f_A$ that algorithm $A$ fails to optimize.

We assume that $A$ will only query the function inside some disk of radius $R$ about the origin, for some (possibly exponentially large) $R$.

Now, we first choose a $y$-coordinate $y_0$ (that $A$ cannot specify exactly, for example an irrational number) for the star center; we shall choose $x_0$ later.
Then, we simulate running the algorithm $A$, assuming that in response to any query $(x,y)$ that $A$ makes to the function/gradient oracle, the return value is $f_A(x, y)=1$ and the gradient $\Grad f_A(x, y)$ is 0 in the $x$ direction and $1/(y-y_0)$ in the $y$ direction.

Since $A$ is a polynomial time algorithm, there must only be polynomially many queries before $A$ terminates.
Therefore, we may choose $x_0$ for the star center such that 1) $(x_0, y_0)$ is not exponentially close to collinear with any pair of query points, and 2) the $x$ coordinate of the star center, $x_0$, is chosen to be significantly far from the $x$ coordinates of all query points.

We thus construct a star-convex function $f_A$ subject to the constraints that 1) it passes through all the points simulated above, with corresponding gradients, 2) it has star center at $(x_0,y_0)$, and 3) it is infinitely differentiable, with each derivative being at most exponentially large in the degree of the derivative and the running time of $A$. (Since no pair of query points are close to collinear with the star center, standard interpolation lets us construct $f_A$ by first defining a smooth function $g_A$ on the unit circle and linearly extending $g_A$ to the entire plane, except for an exponentially small region about $(x_0,y_0)$ that we make smooth instead of ``conically" shaped.)

Since the function and gradient oracles return zero information about the $x$-coordinate of the star center, algorithm $A$ clearly has a hopeless task optimizing this function.

\end{example}

In summary, gradient information is very hard to use effectively in star-convex optimization, even for smooth functions. Our algorithm, outlined below, instead only queries the function value---not because we object to gradients, but rather because they do not seem to help.


\subsection{Our Approach}

Our overall approach is via the ellipsoid method, which repeatedly refines an ellipsoidal region containing the star center (global optimum) by iteratively computing ``cuts" that contain the star center, while significantly reducing the overall volume of the ellipsoid.

As mentioned in Section~\ref{sect:Hardness}, even for smooth functions with access to a gradient oracle, the cutting planes induced by the gradients may yield no significant progress in some directions.
Finding ``useful" cutting planes in the star-convex setting requires three novelties---see the introduction to Section~\ref{sec:cuts} for complete details. {\bf First}, the star-convex function may be discontinuous, without even subgradients defined at most points, so we instead rely on a sampling process involving the ``\emph{blurred logarithm}" of the objective function. The blurred logarithm mitigates both the (potentially) exponential range of the function, and the (potentially) arbitrary discontinuities in the domain. The blurred logarithm, furthermore, is differentiable, and sampling results let us estimate and use its derivatives in our algorithm. This technique is similar to that of randomized smoothing~\cite{Duchi:2012}. {\bf Second}, the negative gradient of the blurred logarithm might point away from the star center (global optimum)---despite all gradients of the unblurred function (when they exist) pointing towards the star center---because of the way blurring interacts with sharp wedge-shaped valleys. Addressing this requires an averaging technique that repeatedly samples Gaussians-in-Gaussians, until it detects sufficient conditions to conclude that the gradient may be used as a cut direction. {\bf Third} and finally, the usual cutting plane criterion---that the volume of the feasible region decreases exponentially---is no longer sufficient in the star-convex setting, as an ellipsoid in two coordinates $(x,y)$ might get repeatedly cut in the $y$ direction without any progress restricting the range of $x$. This is provably not a concern in convex optimization. Our algorithm tackles this issue by ``locking" axes of the ellipsoid smaller than some threshold $\tau$, and seeking cuts orthogonal to these axes; this orthogonal signal may be hidden by much larger derivatives in other directions, and hence requires new techniques to expose. The counterintuitive approach is that, in order to expose structure within an exponentially thin ellipsoid dimension we must search exponentially far outside the ellipsoid in this dimension.

\subsection{Stochastic Optimization}
Our approach extends easily to the stochastic optimization setting.
Here, the objective function is the expectation over a distribution of star-convex functions which share the same star center and optimum value.
For example, in the context of empirical risk minimization with a star-convex loss function $L$ where we assume there is a true hypothesis $h$ such that each labeled example is correctly predicted by $h$, then for each example $i$, the loss $L(\hat{h}(x_i), y_i)$---considered as a function of $\hat{h}$---is star-convex, with global optimum 0 at $\hat{h}=h$. And the overall objective function is the expectation of this quantity, over random examples: $\mathbb{E}_i[L(\hat{h}(x_i), y_i)]$, which is thus also star-convex with global optimum 0 at $\hat{h}=h$.

The optimization algorithm for this new stochastic setting is actually identical to the general star-convex optimization algorithm. Since star-convex functions can take arbitrarily unrelated values on nearby rays from the star center, our algorithm is already fully equipped to deal with the situation, and adding explicitly unrelated values to the model by stochastically sampling from a family of functions $L_i$ makes the problem no harder. The algorithm, analysis, and convergence are unchanged.

\subsection{Outline}
In Section~\ref{sec:Statement} we introduce the oracle model by which we model access to star-convex functions. Because we address such a large class of potentially badly-behaved functions (including, for example, star-convex functions which, in polar coordinates, behave very differently depending on whether the angle is rational or irrational), we take great care to specify the input and output requirements of our optimization algorithm appropriately, in the spirit of Lov\'{a}sz~\cite{Lovasz:1987}.

In Section~\ref{sec:ellipsoid} we present the overall star-convex optimization algorithm, a cutting-plane method that at each stage tracks an ellipsoid that contains the star center (global optimum). The heart of the algorithm consists of finding appropriate cutting planes at each step, which we present in Section~\ref{sec:cuts}.

\section{Problem Statement and Overview}
\label{sec:Statement}

Our aim here is to discuss algorithms for optimizing star-convex functions in the most general setting possible, and thus we must be careful about how the functions are specified to the algorithms.
In particular, we do not assume continuity, so functions with one behavior on the rational points and a separate behavior on irrational points are possible. Therefore, it is essential that our algorithms have access to the function values of $f$ at inputs beyond the usual rational points expressible via standard computer number representations.
As motivation, see Example~\ref{ex:rationals}
which describes a star-convex function that has value 1 on essentially all rational points an algorithm is likely to query, despite having a rather different landscape on the irrational points, leading to a global minimum value of 0.

\begin{example}\label{ex:rationals}
We present the following example of a class of star-convex functions $f$ in 2-dimensions parameterized by integers $(i,j)$ such that $f_{i,j}$ has global optimum at $(1/\sqrt{2}+i,1/\sqrt{3}+j)$.
The class has the property that, if $i$ and $j$ are chosen randomly from an exponential range, then, except with exponentially small probability, any (probabilistic) polynomial time algorithm that accesses $f$ only at rational points will learn nothing about the location of the global optimum.
It demonstrates the need for access beyond the rationals.

We define a star-convex function $f_{i,j}$ parameterized by integers $(i,j)$ that has a unique global optimum at $(1/\sqrt{2}+i,1/\sqrt{3}+j)$.
We evaluate $f(x,y)$ via three cases:
\begin{enumerate}
  \item If the ray from $(1/\sqrt{2}+i,1/\sqrt{3}+j)$ to $(x,y)$ passes through no rational points, then let $f(x,y)=||(x,y)-(1/\sqrt{2}+i,1/\sqrt{3}+j)||_2$.
  \item Otherwise, if the ray from $(1/\sqrt{2}+i,1/\sqrt{3}+j)$ to $(x,y)$ passes through a rational point $(p,q)$, then:
    \begin{enumerate}
    \item If $\lfloor p\rfloor=i$ and $\lfloor q\rfloor=j$ then let $f(x,y)=||(x,y)-(1/\sqrt{2}+i,1/\sqrt{3}+j)||_2$ as above.
    \item Else, let $$f(x,y)=\frac{||(x,y)-(1/\sqrt{2}+i,1/\sqrt{3}+j)||_2}{||(p,q)-(1/\sqrt{2}+i,1/\sqrt{3}+j)||_2}$$
  \end{enumerate}
\end{enumerate}
We note that since $1/\sqrt{2}$ and $1/\sqrt{3}$ are linearly independent over the rationals, no ray from $(1/\sqrt{2}+i,1/\sqrt{3}+j)$ passes through more than one rational point.
Hence the above three cases give a complete definition of $f_{i,j}$.

The function $f$ is star-convex since each of the three cases above apply on the entirety of any ray from the global optimum, and these cases each define a linear function on the ray.
The derivative of $f$ along any of these rays is at most $1/(1-1/\sqrt{2})$, since $(1-1/\sqrt{2})$ is the closest that a point outside the integer square at $(i,j)$ can get to the global optimum $(1/\sqrt{2}+i,1/\sqrt{3}+j)$; thus the function is linearly bounded in any finite region.
Further, $f$ evaluated at any \emph{rational} point $(p,q)$ will fall into Case 2, and thus, unless $(\lfloor p\rfloor,\lfloor q\rfloor)=(i,j)$, the function $f$ will return 1.

Thus, no algorithm that queries $f$ only at rational points can efficiently optimize this class of functions, because unless the algorithm \emph{exactly} guesses the pair $(i,j)$---which was drawn from an exponentially large range---no information is gained (the value of $f$ will always be 1 otherwise).
\end{example}

Since directly querying function values of general star-convex functions at rational points is so limiting, we instead introduce the notion of a \emph{weak sampling evaluation oracle} for a star-convex function, adapting the definition of a \emph{weak evaluation oracle} by Lov\'{a}sz~\cite{Lovasz:1987}.

\begin{definition}
\label{def:evaluation}
A \emph{weak sampling evaluation oracle} for a function $f:\Real^n \to \Real$ takes as inputs a point $x \in \Rational^n$, a positive definite covariance matrix $\Sigma$, and an error parameter $\ferr$. The oracle first chooses a random point $y \from \Normal(x,\Sigma)$, and returns a value $r$ such that $|f(y)-r|<\ferr$.
\end{definition}

The Gaussian sampling in Definition~\ref{def:evaluation} can be equivalently changed to choosing a random point in a ball of desired (small) radius, since any Gaussian distribution can be approximated to arbitrary precision (in the total variation sense) as the convolution of itself with a small enough ball, and vice versa. 

Because inputs and outputs to the oracle must be expressible in a polynomial number of digits, we consider ``well-guaranteed" star-convex functions (in analogy with Lov\'{a}sz~\cite{Lovasz:1987}), where the radius bound $R$ and function bound $B$ below should be interpreted as huge numbers (with polynomial numbers of digits), such as $10^{100}$ and $10^{1000}$ respectively. Numerical accuracy issues in our analysis are analogous to those for the standard ellipsoid algorithm and we do not discuss them further.

\begin{definition}\label{def:boundedness}
  A weak sampling evaluation oracle for a function $f$ is called \emph{well-guaranteed} if the oracle comes with two bounds, $R$ and $B$, such that 1) the global minimum of $f$ is within distance $R$ of the origin, and 2) within distance $10nR$ of the origin, $|f(x)|\le B$.
\end{definition}

Such sampling gets around the obstacles of Examples~\ref{ex:deterministic} and~\ref{ex:rationals} because even if the evaluation oracle is queried at a predictable rational point, the value it returns will represent the function evaluated at an unpredictable, (typically) irrational point nearby.

At this point, our notion of oracle access may \emph{seem} unnatural, in part because it allows access to the function at irrational points that are not computationally expressible.
However, there are two natural and widely used justifications for this approach, of different flavors.
First, the oracle represents a computational model of a mathematical abstraction (the underlying star-convex function), where even for mathematically pathological functions, we can actually in many cases implement simple code to emulate oracle access in the manner described above.
For example, it is in fact easy to implement a weak sampling evaluation oracle for the function of Example~\ref{ex:rationals}, where rays from the star center through rational points behave differently from the other rays.
Since the rational points have Lebesgue measure 0, the complexities introduced by case 2 of Example~\ref{ex:rationals} happen with probability 0, and thus we can write code to simulate a weak sampling evaluation oracle by only implementing the simpler case 1, evaluating $f(x,y)=||(x,y)-(1/\sqrt{2}+i,1/\sqrt{3}+j)||_2$ to the requested accuracy $\pm \ferr$.


Second, setting oracle implementation issues aside, the model cleanly separates \emph{accessing} a potentially pathological function, from \emph{computing properties} of it, in this case, optimizing it. The more pathological a function is, the more surprising it is that any efficient automated technique can extract structure from it. Thus, in some sense, the unrealistic regime of pathological functions, for which we do not even know how to write code emulating oracle access, yields the most surprising regime for the success of the algorithmic results of this paper. This regime is the most insightful from a theory perspective almost exactly because it is the most unnatural and counterintuitive from a practical perspective.



Having established weak sampling evaluation oracles as input to our algorithms, we must now take the same care to define the form of their outputs.
The output of a traditional (convex) optimization problem is an $x$ value such that $f(x)$ is close to the global minimum. For star-convex functions, accessed via a weak sampling evaluation oracle, one can instead ask for a small spherical region in which $f(x)$ remains close to its global optimum---given an input point $x$ within distance $\tau$ of the star center, star convexity and boundedness of $f$ imply that $f(x)$ must be within value $\tau\cdot\frac{4B}{10nR-R-\tau}$ of the global optimum (see the proof of Lemma~\ref{lem:TinyRadius}).

However, such an output requirement is too much to ask of a star-convex optimization algorithm: see Example~\ref{ex:output} for an example of a star-convex function for which there is a large region in which the distributions of $f(x)$ in all small (Gaussian) balls are indistinguishable from each other except with negligible probability, say $10^{-100}$. That is, no algorithm can hope to distinguish a small ball around the global optimum from a small ball anywhere else in this region, except by using close to $10^{100}$ function evaluations. Instead, since on this entire region, the function is close to the global optimum except with $10^{-100}$ probability, the most natural option is for an optimization algorithm to return any portion of this region, namely a region where the function value is within $\ferr$ of optimal except with probability $10^{-100}$. This is how we define the output of a star-convex optimization problem: in analogy to the input convention, we ask optimization algorithms to return a Gaussian region, specified by a mean and covariance, on which the function is near-optimal on all but a $\perr$ fraction of its points. (See Example~\ref{ex:output} for details.)

\begin{definition}\label{def:problem}
  The \emph{weak star-convex optimization problem} for a Lebesgue measurable star-convex function $f$, parameterized by $\perr,\ferr, F$, is as follows.
Given a well-guaranteed weak sampling evaluation oracle for $f$, return with probability at least $1-F$ a Gaussian $\mathcal{G}$ such that $Pr[f(x) \le f^\ast+\ferr:x\leftarrow \mathcal{G}]\ge 1-\perr$.
\end{definition}

We now formally state our main result.
\begin{theorem}\label{thm:main}
Algorithm~\ref{alg:Ellipsoid} optimizes (in the sense of Definition~\ref{def:problem}) any Lebesgue measurable star-convex function $f$ in time $\poly(n,1/\perr,\log\frac{1}{\ferr},\log\frac{1}{F},\log R,\log B)$.
\end{theorem}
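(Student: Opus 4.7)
The plan is to realize Theorem~\ref{thm:main} via a cutting-plane ellipsoid method that maintains, at each stage, an ellipsoid $E_t$ guaranteed (with high probability) to contain the star center $x^\ast$. Initialization takes $E_0$ to be the ball of radius $R$ about the origin, which contains $x^\ast$ by Definition~\ref{def:boundedness}. The main loop, at iteration $t$, invokes a cut-finding subroutine on the current $E_t$ with center $c_t$: it produces a unit vector $v$ such that the halfspace $\{x : v\cdot(x-c_t)\le 0\}$ contains $x^\ast$, and the standard ellipsoid update then shrinks volume by a factor $\exp(-\Omega(1/n))$. The loop terminates once every principal axis of $E_t$ has length below a small threshold $\tau$, at which point we output the Gaussian $\Normal(c_t,\Sigma_t)$ with $\Sigma_t$ matching the (small) ellipsoid.

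The cut-finding subroutine is where the three novelties from Section~1.4 appear. First, in lieu of a gradient of $f$ (which may not exist, or may be uninformative), I would estimate the gradient of the \emph{blurred logarithm} of $f$ by calls to the weak sampling evaluation oracle: the logarithm compresses the exponential range $[0,B]$ to a polynomial range, and the Gaussian blurring renders the blurred log differentiable even when $f$ is arbitrarily discontinuous. Second, because the true gradient of the blurred logarithm can still point \emph{away} from $x^\ast$ when $f$ has sharp wedge-shaped valleys, I would apply an averaging scheme that draws a Gaussian-of-Gaussians and only accepts a cut when a statistical test certifies that the candidate downhill direction is actually consistent across scales. Third, to combat the failure mode where $E_t$ becomes exponentially thin in some axis while cuts keep appearing only in transverse directions, I would ``lock'' any axis whose length falls below $\tau$ and, on subsequent iterations, search for cuts by sampling \emph{exponentially far} along the locked axes; on those scales the star-convexity signal in the locked direction dominates the transverse noise.

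The analysis then factors into four routine pieces once the subroutine is in hand. \textbf{Correctness of each cut:} conditional on the statistical tests passing, $v$ separates $x^\ast$ from the discarded halfspace except with probability $F/\poly(\text{iterations})$, so a union bound gives overall success $1-F$. \textbf{Volume/axis accounting:} standard ellipsoid bookkeeping yields $O(n^2\log(R/\tau))$ iterations before all unlocked axes reach $\tau$, after which the locked-axis mechanism drives the remaining ones down symmetrically. \textbf{Geometry-to-function translation:} once $E_t$ fits in a ball of radius $\tau$ about its center, every point of $E_t$ is within distance $\tau$ of $x^\ast$, and the bound $f(x)-f^\ast\le \tau\cdot\tfrac{4B}{10nR-R-\tau}$ from the discussion preceding Lemma~\ref{lem:TinyRadius} lets me choose $\tau=\Theta(\ferr\cdot nR/B)$ so that the output Gaussian satisfies $\Pr[f(x)\le f^\ast+\ferr]\ge 1-\perr$. \textbf{Runtime:} each iteration uses $\poly(n,\log B,\log\tfrac{1}{F},\log\tfrac{1}{\ferr})$ oracle calls, and the iteration count is $\poly(n,\log R,\log\tfrac{1}{\ferr})$; combining gives the claimed bound.

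The main obstacle will be the correctness of cut-finding, specifically the two hardest sub-claims: (a) that the Gaussian-in-Gaussian averaging of blurred-log gradients, together with star convexity, really certifies that the proposed halfspace contains $x^\ast$ rather than some spurious wedge minimum; and (b) that sampling exponentially far along a locked axis produces a statistically detectable mean difference in function value (again via star convexity) that survives transverse noise and remains a valid cut across all future ellipsoids. Everything else—standard ellipsoid volume decrease, union-bounding the failure probabilities, and converting a $\tau$-small region to an $\ferr$-accurate Gaussian output via the boundedness and star-convexity of $f$—is a bookkeeping exercise once the cut-finding guarantee is proved.
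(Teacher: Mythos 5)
Your high-level architecture coincides with the paper's (ellipsoid method; blurred logarithm; Gaussian-in-Gaussian averaging; locking thin axes and sampling exponentially far outside the ellipsoid), and you correctly identify cut-finding as the crux. However, two of the steps you classify as ``routine bookkeeping'' are not consistent with what the deferred cut-finding guarantee actually requires. First, your subroutine has only one output mode --- a cut direction $v$ --- but no such subroutine can exist: for a star-convex function that is flat to within $\epsilon'$ on a large Gaussian region around the current center (except on a measure-$\perr$ set, as in Example~\ref{ex:output}), there is provably no detectable downhill signal and hence no cut. The paper's Algorithm~\ref{alg:SingleCut} therefore has a second exit (Step 1b, justified by Lemmas~\ref{lem:tail} and~\ref{lem:Victory}): when the sampled values are $\epsilon'$-concentrated, star-convexity along rays forces $f^\ast$ to be close to the observed minimum, so the Gaussian itself is returned as the answer. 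Correspondingly, the averaging/Markov argument for producing a cut (Lemma~\ref{lem:MarkovGadget}) only goes through under the complementary hypothesis that $\Pr[f(x)-z\in(\epsilon',2B)]$ is bounded below by $\Omega(\perr)$ for \emph{all} widths $\sigma_\top$ in the exponential range, which is exactly what failing the Step 1b test over a geometric mesh certifies (via Lemma~\ref{lem:Pinsker}). Your plan, with a single exit through the all-axes-below-$\tau$ condition, cannot union-bound its way around this case.

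Second, your choice $\tau=\Theta(\ferr\cdot nR/B)$ is large enough for the geometry-to-function translation (Lemma~\ref{lem:TinyRadius} only needs $2B\cdot 2\tau/(10nR-R-\tau)<\ferr$), but it is far too large for the locked-axis mechanism to work. The whole point of searching exponentially far outside the ellipsoid is that the confounding derivative with respect to $\sigma_\top$, averaged over $\sigma_\top\in[\tau',R/\safety]$ on a log scale, is bounded by $\frac{\log 2B-\log\epsilon'}{\log(R/\safety)-\log\tau'}$ (Lemma~\ref{lem:SigmaTopExpectation}), and this must be driven below $\Theta(\perr)$ to avoid swamping the $\Theta(\perr)$ probability term in Lemma~\ref{lem:CutLB}. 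That forces $\tau'=\frac{R}{\safety}\left(\frac{2B}{\ferr'}\right)^{-16/\perr}$ (Definition~\ref{def:Parameters}), i.e., $\tau$ exponentially small in $1/\perr$, not merely proportional to $\ferr R/B$. With your $\tau$ the averaging window has width $\approx\log(2B/\ferr)$ in log scale, the confounding term is $\Theta(1)$, and no valid cut is certified. Everything else (shallow cuts at offset $1/(3n)$, the volume decrement of Lemma~\ref{lem:Volume}, the axis lower bound of Lemma~\ref{lem:AxesLB}, the iteration count $m$, and the final union bound) matches the paper's Appendix~\ref{ap:ellipsoid} accounting.
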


Observe that we require only Lebesgue measurability of our objective function, and we make no further continuity or differentiability assumptions.
The measurability is necessary to ensure that probabilities and expectations regarding the function are well-defined, and is essentially the weakest assumption possible for any probabilistic algorithm.
The minimality in our assumptions contrasts that of the work by Nesterov and Polyak, which assumes Lipschitz continuity in the second derivative~\cite{Nesterov:2006}.

It is mathematically interesting that the \emph{cardinality} of the set of star-convex functions which we optimize, even in the 2-dimensional case, equals the huge quantity $2^{|\mathbb{R}|}$, while the cardinality of the entire set of continuous functions---which is \NP-hard to optimize, and strictly contains most standard optimization settings---is only $|\mathbb{R}|$.

\section{Our Optimization Approach}\label{sec:ellipsoid}

In this section, we describe our general strategy of optimizing star-convex functions by an adaptation of the ellipsoid algorithm. Our algorithm, like the standard ellipsoid algorithm, looks for the global optimum inside an ellipsoid whose volume decreases by a fixed ratio for each in a series of iterations, via algorithmically discovered cuts that remove portions of the ellipsoid that are discovered to lie in the ``uphill" direction from the center of the ellipsoid.
In Section~\ref{sec:cuts} we will explore the properties of star-convex functions that will enable us to produce such cuts. However, it is crucial that, unlike for standard convex optimization, such cuts are not enough. Consider, for example, the possibility that for a star-convex function $f(x,y)$, a cutting plane oracle only ever returns cuts in the $x$-direction, and never reduces the size of the ellipsoid in the $y$ direction, a situation which provably cannot occur in the standard convex setting.

Thus, when constructing a cutting plane, our algorithm defines an exponentially small threshold $\tau$ (defined in Definition~\ref{def:Parameters}), such that whenever a semi-principal axis of the ellipsoid is smaller than $\tau$, we guarantee a cut orthogonal to this axis. In this section, we make use of the following characterization of our cutting plane algorithm, Algorithm~\ref{alg:SingleCut}, introduced and analyzed in Section~\ref{sec:cuts}:

\begin{proposition*}[See Proposition~\ref{prop:Alg1Correctness}]
With negligible probability of failure, given an ellipsoid containing the star center, centered at the origin with all semi-principal axes longer than $\tau$ scaled to 1, Algorithm \ref{alg:SingleCut} either a) returns a Gaussian region $\mathcal{G}$ such that $\Pr[f(x) \le f^\ast+\ferr : x \from \mathcal{G}] \ge 1 - \perr$, or b) returns a direction $\vec{d}_\bot$, restricted to the subspace spanned by those ellipsoid semi-principal axes longer than $\tau$, such that when normalized to a unit vector $\hat{\vec{d}}_\bot$, the cut $\{x:x\cdot \hat{\vec{d}}_\bot\leq \frac{1}{3n}\}$ contains the global minimum.
\end{proposition*}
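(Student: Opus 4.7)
I would follow the three novelties advertised in Section~1.4. Define the \emph{blurred logarithm} $F(x) = \log \Exp_{y \sim \Normal(x,\Sigma)}[\,\widetilde f(y)\,]$ for a shifted/truncated version $\widetilde f$ of $f$ bounded below by a tiny positive quantity; the $\log$ tames the potentially exponential range of $f$, and the Gaussian smoothing makes $F$ infinitely differentiable regardless of the pathologies of $f$. Its gradient $\nabla F(x)$ is estimable by the weak sampling evaluation oracle via a standard score-function estimator averaged over $y \sim \Normal(x,\Sigma)$. Crucially, I would take $\Sigma$ to be unit-scale in the long-axis subspace of the rescaled ellipsoid but \emph{exponentially wide} in the short-axis directions, implementing the ``look far outside the feasible region'' idea of the third novelty so that the projection of $\nabla F$ onto the long-axis subspace is not drowned by the huge derivatives coming from the nearly-collapsed short axes.

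\textbf{Why a trustworthy gradient yields the required cut.} Star-convexity of $f$ propagates to $F$ in an averaged form: integrating $f(\alpha x^\ast + (1-\alpha)x) \le \alpha f^\ast + (1-\alpha)f(x)$ against the Gaussian kernel and taking logs shows that $\nabla F(\vec 0)\cdot x^\ast \le 0$, so the halfspace opposite $\nabla F(\vec 0)$ contains $x^\ast$. Projecting onto the long-axis subspace yields $\vec d_\bot$. Since $x^\ast$ lies in the rescaled unit ball up to short-axis slack of size $\tau$, and $\tau \ll 1/n$ by the choice in Definition~\ref{def:Parameters}, a cut at offset $1/(3n)$ in the direction $\hat{\vec d}_\bot$ still contains $x^\ast$ with room to spare, provided the gradient direction is estimated to within a constant angular error.

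\textbf{Gaussians-in-Gaussians adjudicates the two cases.} The subtle obstacle is that $\nabla F$ may itself point away from $x^\ast$: blurring across a sharp wedge-shaped valley can create a local gradient into the valley rather than toward the star center. Following the second novelty, I would repeatedly sample an outer Gaussian of perturbed centers $x_i$ and at each $x_i$ estimate $\nabla F(x_i)$ via inner Gaussian sampling. If the projected directions agree (up to small angular error) across the outer samples, output the consistent direction as $\vec d_\bot$ for option~(b). If they systematically cancel or fluctuate, the smoothed-log landscape must be nearly flat across the outer Gaussian, which by a concentration argument combined with the exponential boundedness of $f$ and with star-convexity forces the function values on most of that Gaussian to lie within $\ferr$ of $f^\ast$ with probability $\ge 1-\perr$; return that Gaussian as $\mathcal G$ for option~(a).

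\textbf{Main difficulty.} The principal obstacle is the quantitative calibration of the inner covariance, the outer covariance, the sample counts in each layer, the consistency threshold, and the short-axis widening of $\Sigma$, so that simultaneously: (i) a ``consistent'' verdict certifies that the projected true gradient lies within a constant angle of the direction opposite $x^\ast$, validating the $1/(3n)$ cut offset; (ii) an ``inconsistent'' verdict certifies that the returned Gaussian concentrates within $\ferr$ of $f^\ast$ except on a $\perr$ fraction; and (iii) the exponentially large short-axis derivative components cancel cleanly in the projection onto the long-axis subspace. The last point is the most delicate, since the very widening that exposes the long-axis signal also injects exponentially large noise that must be controlled both in expectation and, through the Gaussians-in-Gaussians consistency test, in variance.
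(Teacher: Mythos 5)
There are genuine gaps here, and the most concrete one is that your central object is the wrong way around: you define $F(x)=\log \Exp_{y\sim\Normal(x,\Sigma)}[\widetilde f(y)]$, the log \emph{of} the blur, whereas the paper's construction (Definition~\ref{def:BlurredLog}, Proposition~\ref{prop:Sampling}) is the blur \emph{of} the log, $\Exp[L_z(y)]$ with $L_z$ truncated to the polynomial-sized range $[\log\ferr',\log 2B]$. This is not cosmetic. The entire reason the log must sit inside the expectation is estimability: the score-function estimator for $\Grad\Exp[L_z]$ has an integrand bounded by $\polylog(B/\ferr')$ times an effectively bounded multiplier, so polynomially many oracle calls suffice (Proposition~\ref{prop:Sampling}). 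With the log outside, $\Grad F=\Grad\Exp[\widetilde f]/\Exp[\widetilde f]$ requires estimating an expectation whose integrand ranges up to $2B$ and whose value can be dominated by rare exponentially large evaluations of $f$; no polynomial-sample estimator exists, and truncating $\widetilde f$ to a polynomial range destroys the star-convexity inequality you need. Relatedly, your one-line claim that integrating star-convexity against the Gaussian gives $\Grad F(\vec 0)\cdot x^\ast\le 0$ silently discards the fact that contracting toward $x^\ast$ also \emph{shrinks the covariance}: the radial derivative decomposes into the useful $\mu_\bot$ term plus three confounding terms (in $\mu_\top$, $\sigma_\bot$, $\sigma_\top$; Lemma~\ref{lem:CutLB}), and bounding those three terms --- via total variation in the thin dimensions, double-sampling to shrink the $\sigma_\bot$ term by $(\sigma_\bot/\sigma_\bot')^2$, and log-uniform averaging of $\sigma_\top$ over an exponential range so the telescoping bound of Lemma~\ref{lem:SigmaTopExpectation} applies --- is the actual content of the proof, which you defer wholesale to ``quantitative calibration.''

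The second gap is the adjudication between cases (a) and (b). Your test --- accept the gradient if directions agree across outer samples, otherwise declare the landscape flat and return the Gaussian --- does not certify either branch. Gradients can be mutually consistent yet all point away from $x^\ast$ (the wedge pathology affects a whole neighborhood, not one point), and conversely fluctuation of estimated directions does not imply the \emph{function values} concentrate near $f^\ast$. The paper's mechanism is different on both sides: case (a) is triggered by a direct function-value test (Step 1b: most samples within $\ferr'$ of the observed minimum, over a Pinsker-mesh of $\sigma_\top$ scales), and turning ``flat near the observed minimum $z$'' into ``$z$ is within $\ferr$ of $f^\ast$'' is a nontrivial geometric argument about pairs of points on rays through the star center (Lemmas~\ref{lem:tail} and~\ref{lem:Victory}) that your proposal does not contain; case (b) is certified per-sample by explicitly estimating the confounding derivatives themselves (the quantity $g$ of Lemma~\ref{lem:MarkovGadget}) and accepting only when they are verified small, with Markov's inequality guaranteeing a quick acceptance. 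As written, your proposal restates the paper's high-level overview but leaves open precisely the steps that constitute the proof.
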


Note that the above precondition on the ellipsoid input is satisfied by applying the appropriate affine transformation to the current ellipsoid before supplying it to Algorithm~\ref{alg:SingleCut}.
Also note that in case a) above, the returned Gaussian region is a solution to the overall optimization problem.

Given these guarantees about the behavior of the cutting plane algorithm, Algorithm~\ref{alg:SingleCut}, we now introduce our overall optimization algorithm, based on the ellipsoid method.

For the following algorithm, we define the number of iterations $m = 6(n+1)\left[n\log\frac{R}{\tau} - (n-1)\log\frac{1+\frac{1}{3n}}{2}\right]$ according to a standard analysis of the multiplicative volume decrease of the ellipsoid method stated in Lemma~\ref{lem:Volume} in Appendix~\ref{ap:ellipsoid}.

\vspace{3mm}
\hspace*{-\parindent}\begin{minipage}{\linewidth}
\begin{framed}
\begin{myalgorithm}[Ellipsoid method]
\label{alg:Ellipsoid}

\noindent\textbf{Input}: A radius $R$ ball centered at the origin which is guaranteed to contain the global minimum.\\
\textbf{Output}: \emph{Either} a) A Gaussian $\mathcal{G}$ such that $\Pr[f(x) \le f^\ast + \ferr: x \from \mathcal{G}] \ge 1 - \perr$ \emph{or} b) An ellipsoid $\mathcal{E}$ such that all function values in $\mathcal{E}$ are at most $f^\ast + \ferr$.
\begin{enumerate}
   \item Let ellipsoid $\mathcal{E}_1$ be the input radius $R$ ball.
   \item For $i \in [1, m+1]$
         \begin{enumerate}
            \renewcommand{\labelenumii}{\arabic{enumi}\alph{enumii}.}
            \item If all the axes of $\mathcal{E}_i$ are shorter than $\tau$, then \textbf{Return} $\mathcal{E}_i$ and \textbf{Halt}.
            \item Otherwise, execute Algorithm~\ref{alg:SingleCut} with ellipsoid $\mathcal{E}_i$.
            \item If it returns a Gaussian, then \textbf{Return} this Gaussian and \textbf{Halt}.
            \item Otherwise, it returns a cut direction $\vec{d}_\bot$.
            Apply an affine transformation such that $\mathcal{E}_i$ becomes the unit ball centered at the origin and $\vec{d}_\bot$ points in the negative $x_1$ direction.
            Use the construction in Lemma~\ref{lem:NewEllipsoid} (see appendix) to construct a new ellipsoid $\mathcal{E}_{i+1}$ that includes the intersection of ellipsoid $\mathcal{E}_i$ with the cut.
            \item If any semi-principal axis of the ellipsoid $\mathcal{E}_{i+1}$ is larger than $3nR$ then apply Lemma~\ref{lem:resizing}, and if the center of the ellipsoid has distance $>R$ from the origin then apply Lemma~\ref{lem:recentering}, to yield an ellipsoid of smaller volume, containing the entire intersection of $\mathcal{E}_{i+1}$ with the ball of radius $R$, that now has all semi-principal axes smaller than $3nR$, and has center in the ball of radius $R$.
          \end{enumerate}
\end{enumerate}
\end{myalgorithm}
\end{framed}
\end{minipage}
\vspace{3mm}

The analysis of Algorithm~\ref{alg:Ellipsoid} is a straightforward adaptation of standard techniques for analyzing the ellipsoid algorithm, bounding the decrease in volume of the feasible ellipsoid at each step, until either the algorithm returns an explicit Gaussian as the solution (in Step 2c), or terminates because the ellipsoid is contained in an exponentially small ball (in Step 2a). See Appendix~\ref{ap:ellipsoid} for full details.

\begin{lemma}
\label{lem:EllipsoidHalt}
Algorithm~\ref{alg:Ellipsoid} halts within $m+1$ iterations either through Step 2a or Step 2c.
\end{lemma}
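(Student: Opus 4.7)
The plan is to argue by contradiction: suppose Algorithm~\ref{alg:Ellipsoid} runs for all $m+1$ iterations without triggering Step 2a or Step 2c. Then in every iteration, Algorithm~\ref{alg:SingleCut} returns a cut direction $\vec{d}_\bot$, Step 2d constructs a new ellipsoid via Lemma~\ref{lem:NewEllipsoid}, and Step 2e possibly applies Lemmas~\ref{lem:resizing} and~\ref{lem:recentering}. I would first verify that the invariants needed to re-apply the Proposition persist across iterations: by the Proposition, the halfspace $\{x:x\cdot\hat{\vec{d}}_\bot\le 1/(3n)\}$ contains the star center; Step 2d's construction preserves containment of the cut-restricted region; and Step 2e is explicitly guaranteed to preserve the intersection of the feasible region with the radius-$R$ ball about the origin. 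Thus the star center remains in every $\mathcal{E}_i$, and the precondition of the Proposition holds at every call.

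Next, I would invoke the standard volume-decrease analysis for the ellipsoid method, stated as Lemma~\ref{lem:Volume}. Step 2d multiplies $\operatorname{vol}(\mathcal{E}_i)$ by a factor $\rho<1$ that depends only on $n$ and the cut depth $1/(3n)$; the appearance of $\tfrac{1+1/(3n)}{2}$ in the definition of $m$ is exactly the ``shifted half-width'' that arises from the standard computation of this factor for a cut of depth $1/(3n)$ from the center of the unit ball. Step 2e can only further decrease volume. Iterating these bounds, after $m+1$ iterations the ellipsoid has volume at most $\rho^{m+1}\cdot c_nR^n$, and the value $m=6(n+1)\bigl[n\log\tfrac{R}{\tau}-(n-1)\log\tfrac{1+1/(3n)}{2}\bigr]$ is engineered exactly so that this quantity is strictly less than $c_n\tau^n$.

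The main obstacle is converting this volume bound into the stronger claim required by Step 2a, namely that \emph{every} semi-principal axis is shorter than $\tau$---a purely volumetric bound only forces \emph{some} axis to be small. Here I would exploit the structural guarantee of the Proposition that every cut direction lies in the subspace spanned by axes longer than $\tau$: this ensures that an axis which has already fallen below $\tau$ is cut only in directions orthogonal to it, so the set of ``still-long'' axes is essentially monotone non-increasing. Restricting the volume-decrease bookkeeping to the active (long-axis) subspace and combining with the monotonicity then converts the full-dimensional volume collapse into the desired conclusion that, within $m+1$ iterations, all semi-principal axes must have length below $\tau$, so Step 2a is triggered and the contradiction is reached.
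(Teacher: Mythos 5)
Your setup---the per-iteration volume decrease of Lemma~\ref{lem:Volume}, the preservation of the star center across Steps 2d and 2e, and the observation that a volume bound alone only forces \emph{some} axis to be small---matches the paper, and you correctly isolate the one nontrivial step. But your resolution of that step is where the gap lies. The paper closes it with Lemma~\ref{lem:AxesLB}, a uniform \emph{lower} bound: every semi-principal axis of every $\mathcal{E}_i$ has length at least $\bigl(\tfrac{1+1/(3n)}{2}\bigr)\tau$. This holds because (a) an axis already below $\tau$ is locked, lies orthogonal to the cut, and is only \emph{lengthened} by the update of Lemma~\ref{lem:NewEllipsoid}, while (b) an axis of length at least $\tau$ gives the ellipsoid diameter at least $2\tau$ in any direction with a component along it, and a cut of depth $\tfrac{1}{3n}$ retains at least a $\tfrac{1+1/(3n)}{2}$ fraction of that diameter. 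Combined with the fact that after $m$ iterations the product of the semi-principal axes is at most $\bigl(\tfrac{1+1/(3n)}{2}\tau\bigr)^{n-1}\tau$, this immediately forces the largest axis below $\tau$. This is also the true origin of the $-(n-1)\log\tfrac{1+1/(3n)}{2}$ term in $m$: you attribute that factor to the per-step volume computation, but the per-step factor is just $e^{-1/(6(n+1))}$ (accounting for the $6(n+1)$ prefactor of $m$); the extra term buys the additional volume decrease needed because the other $n-1$ axes may legitimately sit below $\tau$, as low as $\tfrac{1+1/(3n)}{2}\tau$.

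Your substitute---monotonicity of the set of still-long axes plus volume bookkeeping restricted to the active subspace---does not work as stated. A locked axis is expanded by a factor $n/\sqrt{n^2-1}>1$ at each subsequent update, so it can climb back above $\tau$ and re-enter the long set; the claimed monotonicity fails. Moreover the active subspace changes from iteration to iteration, so there is no fixed subspace in which to accumulate the decrease of Lemma~\ref{lem:Volume}, and even granting such bookkeeping, a small restricted volume would again only force \emph{some} axis of that subspace to be small---the very objection you raised for the full space recurs one dimension down. The correct move is to drop both monotonicity and subspace-restricted volumes and instead prove the uniform axis lower bound of Lemma~\ref{lem:AxesLB}.
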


The following lemma shows that, if the star center (global optimum) is found to lie within a ball of exponentially small radius $\tau$, then this entire ball has function value sufficiently close to the optimum that any point within the ball can be returned.

\begin{lemma}
\label{lem:TinyRadius}
Given an ellipsoid $\mathcal{E}$ that a) has its center within $R$ of the origin, b) is contained within a ball of radius $\tau$, and c) contains the star center $x^\ast$, then $f(x)\in [f^\ast,f^\ast+\ferr]$ for all $x\in \mathcal{E}$.
\end{lemma}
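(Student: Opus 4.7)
The lower bound $f(x) \ge f^*$ is immediate since $f^*$ is by definition the global minimum value.

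For the upper bound, the plan is to exploit star-convexity by viewing each $x \in \mathcal{E}$ as a convex combination of $x^*$ and a far-away point $y$ on the same ray from $x^*$. If $x = x^*$, then $f(x) = f^*$ and we are done. Otherwise, let $\hat{u} = (x - x^*)/\|x - x^*\|$ and define $y = x^* + L\,\hat{u}$, where $L := 10nR - R - \tau$. Since the center of $\mathcal{E}$ lies within distance $R$ of the origin and $\mathcal{E}$ is contained in a ball of radius $\tau$, we have $\|x^*\| \le R + \tau$, hence $\|y\| \le \|x^*\| + L \le 10nR$, so the boundedness guarantee $|f(y)| \le B$ from Definition~\ref{def:boundedness} applies.

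Next, writing $x = \alpha x^* + (1-\alpha) y$ with $1 - \alpha = \|x - x^*\|/L$ and $\alpha \in [0,1]$, star-convexity gives
\[
f(x) \;\le\; \alpha f^* + (1-\alpha) f(y) \;=\; f^* + (1-\alpha)\bigl(f(y) - f^*\bigr).
\]
Since $\mathcal{E}$ lies in a ball of radius $\tau$ and both $x$ and $x^*$ are in $\mathcal{E}$, we have $\|x - x^*\| \le 2\tau$, so $1 - \alpha \le 2\tau/L$. Combined with $f(y) - f^* \le |f(y)| + |f^*| \le 2B$, this yields
\[
f(x) - f^* \;\le\; \frac{4B\,\tau}{10nR - R - \tau},
\]
exactly matching the bound quoted in the paper (up to the factor of $2$ coming from $x$ being anywhere in $\mathcal{E}$ rather than within $\tau$ of $x^*$). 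The lemma then follows by invoking the choice of the threshold $\tau$ in Definition~\ref{def:Parameters}, which is set small enough (exponentially small in the relevant parameters) that the right-hand side is at most $\ferr$.

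The only mild obstacle is verifying that the ray from $x^*$ through $x$ can indeed be extended to a point $y$ still within the boundedness region of radius $10nR$; this is handled by the triangle inequality bound $\|x^*\| \le R + \tau$ above, which is why the lemma explicitly includes hypothesis (a). Everything else is a direct application of star-convexity and the well-guaranteed oracle bound, so no further technical machinery is needed.
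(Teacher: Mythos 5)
Your proof is correct and follows essentially the same argument as the paper: extend the ray from $x^*$ through $x$ to a point $y$ within the radius-$10nR$ boundedness region, apply star-convexity with coefficient $1-\alpha \le 2\tau/(10nR-R-\tau)$, and use $f(y)-f^* \le 2B$ to get the same bound $4B\tau/(10nR-R-\tau) < \ferr$. The only cosmetic difference is that you place $y$ at distance exactly $10nR-R-\tau$ from $x^*$ rather than on the sphere of radius $10nR$ itself, which changes nothing.
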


\begin{proof}
For any $x\in\mathcal{E}$, let $y$ be the intersection of the ray $\overrightarrow{x^*x}$ with the sphere of radius $10nR$ about the origin. The boundedness of $f$ implies that $f(y)\leq B$, and that $f(x^\ast)\geq -B$. Thus star-convexity yields that $f(x)\leq f^\ast+2B\frac{||x-x^\ast||_2}{||x^\ast-y||_2}\leq f^\ast+2B\frac{2\tau}{10nR-R-\tau} < f^\ast+\ferr$ since $\tau \ll \ferr$.
\end{proof}

Since the ellipsoid returned in Step 2a will always satisfy the preconditions of Lemma~\ref{lem:TinyRadius}, the entire ellipsoid has function value within $\ferr$ accuracy of the global minimum.
In practice, we can just return this region.
However, if we do wish to conform to our problem statement (Definition~\ref{def:problem}), we may simply return a Gaussian ball of sufficiently small radius and centered at the ellipsoid center, such that there is only negligible probability of sampling a point more than $\tau$ from its center.

In summary, Lemmas~\ref{lem:EllipsoidHalt} and~\ref{lem:TinyRadius} imply that Algorithm~\ref{alg:Ellipsoid} optimizes a star-convex function in time $\poly(n,1/\perr,\log\frac{1}{\ferr},\log\frac{1}{F},\log R,\log B)$, proving Theorem~\ref{thm:main}.

\section{Computing Cuts for Star-Convex Functions}
\label{sec:cuts}

The general goal of this section is to explain how to compute a single cut, in the context of our adapted ellipsoid algorithm explained in Section~\ref{sec:ellipsoid}.
Namely, given (weak sampling, in the sense of Definition~\ref{def:evaluation}) access to a star-convex function $f$, and a bounding ellipsoid in which we know the global optimum lies, we present an algorithm (Algorithm~\ref{alg:SingleCut}) that will either a) return a cut passing close to the center of the ellipsoid and containing the global optimum, or b) directly return the answer to the overall optimization problem.

There are several obstacles to this kind of algorithm that we must overcome.
Star-convex functions may be very discontinuous, without any gradients or even subgradients defined at most points.
Furthermore, arbitrarily small changes in the input value $x$ may produce exponentially large changes in the output of $f$, for any $x$ that is not already exponentially close to the global optimum.
This means that standard techniques to even approximate the function's shape do not remotely suffice in the context of star-convex functions.
Finally, considering again the example of linearly extending an arbitrary function of the unit sphere surface (see Figure~\ref{Fig:fig1} or item 3 in Appendix~\ref{sec:examples}), unlike regular convex functions, star-convex functions do not become ``locally flat" along thin dimensions of increasingly thin ellipsoids.
Thus, the ellipsoid algorithm in general might repeatedly cut certain dimensions while neglecting others.

The algorithm in this section (Algorithm~\ref{alg:SingleCut}) employs three key strategies that we intuitively explain now.

\begin{enumerate}
  \item We do not work directly with the star-convex function $f$, but instead work with a \emph{blurred} version of its \emph{logarithm} (defined in Definition~\ref{def:BlurredLog}), which has a) well-defined derivatives, that b) we can estimate efficiently.
Given a badly-behaved (measurable) function $f$, and the pdf of a Gaussian, $g$, blurring $f$ by the Gaussian produces the convolution $f\ast g$, which has well-behaved derivatives since derivatives commute with convolution: namely, letting $D$ stand for a derivative or second derivative, we have $D(f\ast g)=f\ast (D g)$.
Since the derivatives of a Gaussian pdf $g$ are each bounded, we can estimate derivatives of blurred versions of discontinuous functions by sampling.
Sampling bounds imply that if we have a random variable bounded to a range of size $r$, then we can estimate its mean to within accuracy $\pm\ferr$ by taking the average of $O((\frac{r}{\ferr})^2)$ samples.
The logarithm function (after appropriate translation so that its inputs are positive), maps a star-convex function $f$ to a much smaller range, which enables accurate sampling (in polynomial time).
Therefore, we can efficiently estimate derivatives of the blurred logarithm of $f$.
  \item While intuitively, the negative gradient of (the blurred logarithm of) $f$ points towards the global minimum, this signal might be overpowered by a confounding term in a different direction (see Lemma~\ref{lem:CutLB}), making the negative gradient point \emph{away} from the global optimum in some cases.
To combat this, our algorithm repeatedly estimates the gradient at points sampled from a distribution around the ellipsoid center, and for each gradient, estimates the confounding terms, returning the corresponding gradient only once the confounding terms are found to be small.
Lemma~\ref{lem:DoubleSampling} shows that the confounding terms are small \emph{in expectation}, so Markov's inequality shows that our strategy will rapidly succeed.
  \item The ellipsoid algorithm in general might repeatedly cut certain dimensions while neglecting others, and our algorithm must actively combat this.
If some axes of the ellipsoid ever become exponentially small, then we ``lock" them, meaning that we demand a cut orthogonal to those dimensions, thus maintaining a global lower bound on the length of any axis of the ellipsoid.
Combined with the standard ellipsoid algorithm guarantee that the volume of the ellipsoid decreases exponentially, this implies that the diameter of the ellipsoid can be made exponentially small in polynomial time, letting us conclude our optimization. Finding cuts under this new ``locking" guarantee, however, requires a new algorithmic technique.

We must take advantage of the fact that the ellipsoid is exponentially small along a certain direction $b$ to somehow gain the new ability to produce a cut orthogonal to $b$.
Convex functions become increasingly well-behaved on increasingly narrow regions, however, star-convex functions crucially do not (see item 3 in Section~\ref{sec:examples}).
Thus, as the thickness of the ellipsoid in direction $b$ gets smaller, the structural properties of the function inside our ellipsoid \emph{do not} give us any new powers.
Strangely, we can take advantage of one new parameter regime that at first appears useless: relative to the thinness of the ellipsoid, we have exponentially much space in direction $b$ \emph{outside} of the ellipsoid.
\end{enumerate}

To implement the intuition of Step 1 above, we define $L_z(x)$, a truncated and translated logarithm of the star-convex function $f$, which maps the potentially exponentially large range of $f$ to the (polynomial-sized) range $[\log \epsilon',\log 2B]$, where $\epsilon'$ is defined below (Definition~\ref{def:Parameters}), and is slightly smaller than our function accuracy bound $\ferr$. In the below definition, $z$ intuitively represents our estimate of the global optimum function value, and will record, essentially, the smallest function evaluation seen so far (see Algorithm~\ref{alg:SingleCut}).

\begin{definition}
\label{def:BlurredLog}
Given an objective function $f$ with bound $|f(x)| \le B$ when $||x||\leq R'$ and an offset value $z\ge f^\ast$, we define the \emph{truncated logarithmic version of $f$} to be
$$ L_z(x) = \begin{cases} \log \epsilon' & f(x) - z \le \epsilon'\\ \log 2B & f(x) - z \ge 2B\\ \log (f(x)-z) & \text{ otherwise}\end{cases} $$
\end{definition}

While mapping to a small range, $L_z(x)$ nonetheless gives us a precise view of small changes in the function as we converge to the optimum. The next result shows that, if we ``blur" $L_z(x)$ by drawing $x$ from a Gaussian distribution, then not only can we efficiently estimate the expected value of the ``blurred logarithm of $f$", we can also estimate the derivatives of this expectation with respect to changing either the mean or the variance of the Gaussian.

For an arbitrary bounded (measurable) function $h$, the derivative of its expected value over a Gaussian of width $\sigma$ with respect to either a) moving the center of the Gaussian or b) changing its width $\sigma$, is bounded by $O(\frac{1}{\sigma})$. Thus we normalize the estimates below in terms of the product of the Gaussian width and the derivative, instead of estimating the derivative alone.

\begin{proposition}\label{prop:Sampling}

  Let $\Normal(\mu,\Sigma)$ be a Gaussian with diagonal covariance matrix $\Sigma$ consisting of elements $\sigma_1^2,\sigma_2^2,\ldots,\sigma_n^2$.
  For an error bound $\kappa>0$ and a probability of error $\delta>0$, we can estimate each of the following functions to within error $\kappa$ with probability at least $1-\delta$ using $\poly(n,\frac{1}{\kappa},\log\frac{1}{\delta},\log 2B/\epsilon')$ samples: 1) the expectation $\Exp[L_z(x):x\leftarrow \mathcal{N}(\mu,\Sigma)]$; 2) the (scaled) derivative $\sigma_1\cdot\Diff{\mu_1} \Exp[L_z(x):x\leftarrow \mathcal{N}(\mu,\Sigma)]$; and 3) the derivative with respect to scaling $\sigma_1\cdot\Diff{\sigma_1} \Exp[L_z(x):x\leftarrow \mathcal{N}(\mu,\Sigma)]$. \emph{A fortiori}, these derivatives exist.
\end{proposition}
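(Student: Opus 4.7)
The plan is to write each of the three target quantities as an expectation of some bounded-enough random variable under the Gaussian $\mathcal{N}(\mu,\Sigma)$, and then estimate it by straightforward Monte Carlo sampling, with sample complexity governed by the variance/range of the estimator.

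First I would handle claim (1) directly: since $L_z$ takes values in the interval $[\log\epsilon',\log 2B]$, so in a range of width $R_L := \log(2B/\epsilon')$, drawing $N$ independent samples $x^{(1)},\dots,x^{(N)} \sim \mathcal{N}(\mu,\Sigma)$ and averaging $L_z(x^{(i)})$ gives, by Hoeffding's inequality, an estimator within $\kappa$ of $\Exp[L_z(x)]$ with probability $\geq 1-\delta$ as long as $N = O\bigl((R_L/\kappa)^2\log(1/\delta)\bigr)$, which is polynomial in the desired parameters.

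For claims (2) and (3) I would use the classical ``score function'' (likelihood-ratio) identity, obtained by differentiating under the integral sign in $\Exp[L_z(x)]=\int L_z(x)\, p_{\mu,\Sigma}(x)\,\d x$. This differentiation is justified by dominated convergence: $L_z$ is bounded and measurable, and the partial derivatives of the Gaussian density $p_{\mu,\Sigma}$ with respect to $\mu_1$ and $\sigma_1$ are integrable and dominated (in any small neighborhood of the parameters) by fixed integrable functions. In particular this proves that the derivatives exist, giving the ``a fortiori'' statement. Writing $Z_1 := (x_1-\mu_1)/\sigma_1$, the standard score-function computation yields
\[
  \sigma_1\cdot\Diff{\mu_1}\Exp[L_z(x)] = \Exp\!\bigl[L_z(x)\cdot Z_1\bigr],
  \qquad
  \sigma_1\cdot\Diff{\sigma_1}\Exp[L_z(x)] = \Exp\!\bigl[L_z(x)\cdot (Z_1^2-1)\bigr],
\]
where under $x\sim\mathcal{N}(\mu,\Sigma)$ the variable $Z_1$ is standard normal. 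Crucially, the $\sigma_1$ factor on the left is precisely what is needed to cancel the $1/\sigma_1^2$ and $1/\sigma_1^3$ factors in the score, so that the resulting random variables no longer depend on $\sigma_1$.

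The remaining task is then to estimate the two expectations on the right. Each estimator is a product of a bounded factor ($|L_z|\le R_L$) and a scalar whose law has sub-Gaussian (for $Z_1$) or sub-exponential (for $Z_1^2-1$) tails. I would handle this by a standard truncation argument: truncate $Z_1$ to the range $[-T,T]$ with $T=\Theta(\sqrt{\log(N/\delta)})$, and truncate $Z_1^2-1$ to $[-T^2,T^2]$; the discarded probability mass is at most $N^{-\Omega(1)}\delta$, contributing negligibly to the estimate, while on the truncated range the summands are bounded so Hoeffding again gives the error bound with $N=\poly(R_L,1/\kappa,\log(1/\delta))$ samples. (Alternatively one could invoke Bernstein's inequality directly for sub-exponential variables, avoiding explicit truncation.) The only mildly delicate point is bookkeeping the extra polylog factor introduced by truncation, which does not affect the $\poly$ bound claimed in the proposition.
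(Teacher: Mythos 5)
Your proposal is correct and follows essentially the same route as the paper's proof: write each derivative as an expectation of $L_z$ against the Gaussian score function, truncate the effectively-bounded multiplier, and apply a Chernoff/Hoeffding bound. (Your explicit $\Exp[L_z(x)(Z_1^2-1)]$ for the $\sigma_1$-derivative is in fact the precise form of the identity the paper states slightly loosely.)
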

\begin{proof}

Chernoff bounds yield the first claim, since $L_z\in [\log\epsilon',\log 2B]$, so $O((\frac{\log 2B/\epsilon'}{\kappa})^2\log \frac{1}{\delta})$ samples suffice.

For the second claim, we note that the expectation can be rewritten as the integral over $\mathbb{R}^n$ of $L_z$ times the probability density function of the normal distribution: $\Exp[L_z(x):x\leftarrow \mathcal{N}(\mu,\Sigma)]=(2\pi)^{-n/2}|\Sigma|^{-1/2}\int_{\mathbb{R}^n} L_z(x)\cdot e^{-(x-\mu)^T\Sigma^{-1}(x-\mu)/2}\,\d x$. Thus the derivative of this expression with respect to the first coordinate $\mu_1$ can be expressed by taking the derivative of the probability density function inside the integral, which ends up scaling it by the vector $(x_1-\mu_1)/\sigma_1^2$. Thus $\sigma_1\cdot\Diff{\mu_1} \Exp[L_z(x):x\leftarrow \mathcal{N}(\mu,\Sigma)]=\Exp[\frac{x_1-\mu_1}{\sigma_1}\cdot L_z(x):x\leftarrow \mathcal{N}(\mu,\Sigma)]$. The multiplier $\frac{x_1-\mu_1}{\sigma_1}$ is effectively bounded, because for real numbers $c$, the probability $\Pr[\left|\frac{x_1-\mu_1}{\sigma_1}\right|>c:x\leftarrow\mathcal{N}(\mu,\Sigma)]$ vanishes faster than $\exp(\frac{c}{\sqrt{n}})$. Thus we can pick $c=\poly(n,\log \frac{1}{\kappa},\log 2B/\epsilon')$ such that replacing the expression in the expectation, $\frac{x_1-\mu_1}{\sigma_1}\cdot L_z(x)$, by this expression clamped between $\pm c$ will change the expectation by less than $\frac{\kappa}{2}$. We can thus estimate the expectation of this clamped quantity to within error $\frac{\kappa}{2}$ with probability at least $1-\delta$ via some sample of size $\poly(n,\frac{1}{\kappa},\log\frac{1}{\delta},\log 2B/\epsilon')$, as desired.

The analysis for the third claim (the derivative with respect to $\sigma_1$) is analogous: the derivative of the Gaussian probability density function with respect to $\sigma_1$ (as in the case of a univariate Gaussian of variance $\sigma_1^2$) scales the probability density function by $|x_1-\mu_1|^2/\sigma_1^3$, and after scaling by $\sigma_1$, this expression measures the square of the number of standard deviations from the mean, and as above, the product $\frac{|x_1-\mu_1|^2}{\sigma_1^2} L_z(x)$ can be clamped to some polynomially-bounded interval $[-c,c]$ without changing the expectation by more than $\frac{\kappa}{2}$. The conclusion is analogous to the previous case.
\end{proof}
As mentioned in the previous section, our guiding aim for the ellipsoid method is to prevent any axis of the ellipsoid from getting too small.
Therefore, in order to treat axes differently depending on their length, we shall identify our basis as the unit vectors along the axes of the current ellipsoid and distinguish between axes that are smaller than $\tau$ versus at least $\tau$, where $\tau$ is an exponentially small threshold for ``thinness", defined below in Definition~\ref{def:Parameters}.

\begin{definition}
\label{def:TopBottom}
  Given an ellipsoid, consider an orthonormal basis parallel to its axes. Each semi-principal axis of the ellipsoid whose length is less than $\tau$, we call a ``thin dimension", and the rest are ``non-thin dimensions".
  Given a vector $\mu$, we decompose it into $\mu = \mu_\bot + \mu_\top$ where $\mu_\bot$ is non-zero only in the non-thin dimensions, and $\mu_\top$ is non-zero only in the thin dimensions.
Similarly, given the identity matrix $I$, we decompose it into $I = I_\bot + I_\top$.
\end{definition}
We apply a scaling to the \emph{non-thin} dimensions so as to scale the non-thin semi-principal axes of the ellipsoid to unit vectors (making the ellipsoid a unit ball in the non-thin dimensions).
We keep the thin dimensions as they are.

We present again Proposition~\ref{prop:Alg1Correctness}, describing the guarantees on our cutting plane algorithm required by Algorithm~\ref{alg:Ellipsoid}, and then state the cutting plane algorithm, Algorithm~\ref{alg:SingleCut}.
Below, we make use of constants defined in Definition~\ref{def:Parameters}  that may be interpreted as follows: $k$ is a polynomial number of mesh points;  $\eta$ is the mesh spacing; $\tau'$ is the minimum size of $\sigma_{\top}$, a Gaussian width in the thin dimensions that is somewhat larger than $\tau$, the size of the ellipsoid in the thin dimensions; $\sigma_\bot'$ is a Gaussian width in the $\bot$ (non-thin) dimensions, of inverse polynomial size; $\sigma_\bot$ is polynomially smaller than $\sigma_\bot'$, and $\safety$ is a polynomial quantity. $S$ is a polynomial number of samples defined in the proof of Proposition~\ref{prop:Alg1Correctness}.
Moreover, the function $g$ in the algorithm below is a sum of a probability and two derivatives of the blurred expectation of $L_z(x)$ (the truncated logarithm in Definition~\ref{def:BlurredLog}), and is defined in the statement of Lemma~\ref{lem:MarkovGadget}.

\begin{proposition}[Correctness of Algorithm~\ref{alg:SingleCut}]
\label{prop:Alg1Correctness}
With negligible probability of failure, Algorithm \ref{alg:SingleCut} either a) returns a Gaussian region $\mathcal{G}$ such that $\Pr[f(x) \le f^\ast+\ferr : x \from \mathcal{G}] \ge 1 - \perr$, or b) returns a direction $\vec{d}_\bot$, restricted to the $\bot$ dimensions, such that when normalized to a unit vector $\hat{\vec{d}}_\bot$, the cut $\{x:x\cdot \hat{\vec{d}}_\bot\leq \frac{1}{3n}\}$ contains the global minimum.
\end{proposition}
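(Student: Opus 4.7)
The plan is to prove Proposition~\ref{prop:Alg1Correctness} by establishing three correctness claims (for the returned cut, for the returned Gaussian, and for the failure probability bound) and weaving them together via the structure of Algorithm~\ref{alg:SingleCut} as a bounded-budget search over a mesh. First I would unpack the algorithm's structure: it iterates over polynomially many ($O(k)$) mesh candidates for a Gaussian mean $\mu$, and for each candidate takes $S=\poly(\cdot)$ samples to estimate (via Proposition~\ref{prop:Sampling}) the triple consisting of the scaled derivatives $\sigma_1\cdot\frac{\d}{\d\mu_1}\Exp[L_z]$ in the non-thin directions, the scaling derivatives $\sigma_1\cdot\frac{\d}{\d\sigma_1}\Exp[L_z]$, and the probability $\Pr[f(x)\le z+\epsilon']$. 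These quantities assemble into the function $g$ of Lemma~\ref{lem:MarkovGadget}; the algorithm uses a Markov-style test on $g$ to decide whether the current gradient estimate is ``trustworthy'' as a cut direction, or else increments the mesh.

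Second, for case (b), I would argue that whenever the algorithm emits a direction $\vec d_\bot$, the Markov gadget guarantees the confounding term of Lemma~\ref{lem:CutLB} has been driven below a polynomially small threshold. Lemma~\ref{lem:CutLB} then gives a quantitative lower bound on $-\hat{\vec d}_\bot\cdot(x^\ast-\mu)$ in terms of the estimated derivative magnitude, with the confounding term subtracted. After substituting the Proposition~\ref{prop:Sampling} accuracy (which can be made polynomially small at the cost of a polynomial number of samples) and the Markov-gadget bound on confounding, this lower bound exceeds $\frac{1}{3n}$ by a safety margin $\safety$, so $x^\ast$ lies in the halfspace $\{x:\hat{\vec d}_\bot\cdot x\le \frac{1}{3n}\}$ as required. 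A key sub-step here is to check that the $\top/\bot$ decomposition of Definition~\ref{def:TopBottom} really does restrict the useful gradient signal to the $\bot$-subspace: because we set $\sigma_\top'$ slightly above $\tau$, the Gaussian extends exponentially beyond the ellipsoid in the thin dimensions, which (per the third strategy in the introduction to Section~\ref{sec:cuts}) is exactly what exposes transverse structure in those directions.

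Third, for case (a), I would argue that if the algorithm exhausts its $k$-point mesh without ever passing the Markov-gadget test, then by Lemma~\ref{lem:DoubleSampling} the expected confounding over the outer Gaussian is small, so Markov's inequality combined with the failure over the whole mesh forces the estimated sublevel-set probability $\Pr[f(x)\le z+\epsilon':x\from\mathcal{G}]$ to exceed $1-\perr$; rescaling the $\epsilon'$-threshold relative to $\ferr$ (here $\epsilon'<\ferr$ per Definition~\ref{def:Parameters}) yields the claimed guarantee $\Pr[f(x)\le f^\ast+\ferr]\ge 1-\perr$, using that $z\ge f^\ast$ by construction. This is where the choice of $k$, $\eta$, and $\safety$ in Definition~\ref{def:Parameters} must be pinned down so that the mesh is fine enough that failure everywhere on it implies the corresponding continuous statement about the Gaussian.

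Finally, I would collect the probability-of-failure accounting: each sampling estimate from Proposition~\ref{prop:Sampling} has a tunable failure probability $\delta$, and we need a union bound across the $O(k\cdot S)$ estimates together with failure probabilities from the Markov inequality. Setting each $\delta$ polynomially small and $S,k$ polynomially large gives total failure probability at most $F$, which is negligible in the sense of the proposition. The main obstacle I anticipate is the case~(b) argument: carefully extracting from Lemma~\ref{lem:CutLB} a clean ``$\frac{1}{3n}$ with margin'' statement while simultaneously controlling (i) the bias from replacing the true expectation by its $\kappa$-accurate estimate, (ii) the residual confounding left over from the Markov test, and (iii) the fact that the thin-dimension coordinates of $x^\ast$ cannot be directly cut but only implicitly controlled through the scaling derivatives, all in units consistent with the non-thin rescaling to the unit ball. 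Handling these three error sources simultaneously within the same polynomial budget is the technical heart of the proof.
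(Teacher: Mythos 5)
Your case~(b) outline is broadly aligned with the paper's argument: the cut direction is certified by combining Lemma~\ref{lem:CutLB} with the Markov-gadget bound on the confounding derivatives (Lemmas~\ref{lem:Isoperimetric}, \ref{lem:DoubleSampling}, \ref{lem:SigmaTopExpectation}, assembled in Lemma~\ref{lem:MarkovGadget}), plus the thin-dimension bound of Lemma~\ref{lem:MovingInThinDimension} and the sampling accuracy of Proposition~\ref{prop:Sampling}; the final halfspace offset $\frac{1}{3n}$ comes simply from $||\mu_\bot'||_2\le\frac{1}{3n}$, not from the gradient bound exceeding $\frac{1}{3n}$ --- one only needs the gradient's component away from the optimum to be strictly positive (at least $\frac{1}{8}\perr$ before estimation error). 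However, you have the control flow of the algorithm partly inverted: the Gaussian (case~a) is returned in Step~1b when some mesh point exhibits \emph{concentration} of function values within $\ferr'$ of the observed minimum; the Markov test of Step~3 is only reached after the entire mesh fails that concentration check, and that failure (propagated to all continuous widths via the Pinsker mesh argument, Lemma~\ref{lem:Pinsker}) is precisely what establishes the hypothesis $\Pr[f(x)-z\in(\ferr',2B)]\ge\frac{22}{32}\perr$ needed for Lemma~\ref{lem:MarkovGadget}. The Markov test then succeeds quickly by rejection sampling; it is never the trigger for returning a Gaussian.

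The genuine gap is in your case~(a) argument. Knowing that $\Pr[f(x)\le z+\ferr':x\from\mathcal{G}]\ge 1-\perr$ does \emph{not} yield $\Pr[f(x)\le f^\ast+\ferr]\ge 1-\perr$ from ``$z\ge f^\ast$ by construction'' --- that inequality points the wrong way. What is needed is an \emph{upper} bound on $z-f^\ast$: a priori the returned Gaussian could sit on a near-constant plateau at height $z$ far above the true optimum, and the guarantee would fail. The paper closes this with Lemma~\ref{lem:Victory} (supported by Lemma~\ref{lem:tail}): if $f$ lies in $[z,z+\ferr']$ on $95\%$ of a unit Gaussian, then along some ray through the star center there exist two points with values in that band whose distances from the center have ratio at least $1+1/(6\max\{||\mu||_2,\sqrt{n}\})$, and star-convexity along that ray forces $f^\ast\ge z-6\ferr'\max\{||\mu||_2,\sqrt{n}\}$; the definition of $\ferr'$ is engineered exactly so that this slack is absorbed into $\ferr$. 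This is a substantive use of star-convexity (the implication is false for general measurable functions) and it is the reason ``failure to find a cut anywhere on the mesh'' can be converted into ``the optimization is already complete.'' Without it, your proof of case~(a) does not go through.
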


\vspace{3mm}
\hspace*{-\parindent}\begin{minipage}{\linewidth}
\begin{framed}
\begin{myalgorithm}[Single cut with locked dimensions]
\label{alg:SingleCut}
Take an orthonormal basis for the ellipsoid, as in Definition~\ref{def:TopBottom}.
We apply an affine transformation so that a) the ellipsoid is centered at the origin, and b) the ellipsoid, when restricted to the $\bot$ dimensions, is the unit ball.\\

\noindent\textbf{Input}: An ellipsoid containing the star center, under an affine transformation as above.\\
\textbf{Output}: \emph{Either} a) A cut direction $\vec{d}_\bot$ \emph{or} b) A Gaussian $\mathcal{G}$.

 \begin{enumerate}
 \item For each $i\in [0, k]$
     \begin{enumerate}
     \item[1a.] Evaluate $f$ at $S$ samples from the Gaussian $\mathcal{G}_i$ of width $\tau'\eta^i$ in the $\top$ dimensions and width $\sigma_\bot'$ in the $\bot$ dimensions (that is, $\mathcal{G}_i=\Normal(\vec{0}, \sigma_\bot'^2 I_\bot + \tau'^2\eta^{2i} I_\top)$).
     \item[1b.] If at least $1 - \frac{31\perr}{32}$ fraction of the evaluations are within $\epsilon'$ of the minimum evaluation (at this iteration $i$), then {\bf Return} $\mathcal{G}_i$ and {\bf Halt}.
     \end{enumerate}
 \item Otherwise, let $z$ be the minimum of all samples in Step 1.
\item Repeatedly sample the following, estimating $g$ to within $\pm \frac{\perr}{32}$ each time\vspace{-1mm} $$g(\mu_\bot', \sigma_\top)\, :\; \mu_\bot'\from\Normal(\vec{0}, (\sigma_\bot'^2 - \sigma_\bot^2)I_\bot)\;\text{ and }\;(\sigma_\top = e^X; X \from \Unif[\log \tau', \log R/\safety])$$
\begin{enumerate}\item[3a.]
    Accept the first pair $(\mu_\bot', \sigma_\top)$ such that $g(\mu_\bot', \sigma_\top) > \frac{7}{32}\perr$.
    \end{enumerate}
 \item {\bf Return} the gradient $\vec{d}_\bot = \Grad_\bot[\Exp[L_z(x) : x \from \Normal(\mu_\bot', \sigma_\bot^2 I_\bot + \sigma_\top^2 I_\top)]]$ \quad(the derivative as $\mu_\bot'$ changes, computed via Proposition~\ref{prop:Sampling}).
\end{enumerate}

\end{myalgorithm}
\end{framed}
\end{minipage}\vspace{3mm}

We note that a special case of the above algorithm is when there are no $\top$ (very thin, thinner than $\tau$) dimensions.
This applies, for example, at the beginning of the optimization process.

The rest of this section develops the mathematical analysis leading to the proof of Proposition~\ref{prop:Alg1Correctness}.

The following lemma (Lemma~\ref{lem:CutLB}) analyzes how $L_z(x)$ (the truncated logarithm of $f$) decreases as $x$ moves towards the global optimum, or equivalently, how $L_z(x)$ increases as $x$ moves away.
A crucial complicating factor is that we always average $L_z(x)$ over $x$ drawn from a Gaussian, and moving the mean of a Gaussian away from the global optimum is not exactly the same thing as moving every point in the Gaussian away from the global optimum.
Lemma~\ref{lem:CutLB} expresses the effect of moving the mean away from the global optimum---restricted to the non-thin $\bot$ dimensions---in terms of a positive probability, minus three confounding derivatives.
If we can show that the left hand side of the expression in the following lemma is positive, this means that a cut in the direction of the gradient of (a Gaussian-blurred) $L_z(x)$ in the $\bot$ dimensions is guaranteed to contain the global minimum.
\begin{lemma}
\label{lem:CutLB}
For a star-convex function $f$ with star center (global minimum) at the origin satisfying $f(0) \leq z$, and any mean $\mu=\mu_\bot+\mu_\top$, and variances $\sigma_\bot^2,\sigma_\top^2$,
\begin{align*}
\left.\Diff{\alpha} \Exp\left[L_z(x) : x \from \Normal(\alpha \mu_\bot + \mu_\top, \sigma_\bot^2 I_\bot + \sigma_\top^2 I_\top) \right]\right|_{\alpha = 1}\hspace{-1cm}&\hspace{1cm} \ge \;\Pr[f(x)-z\in (\epsilon',2B) : x \from \Normal(\mu, \sigma_\bot^2 I_\bot + \sigma_\top^2 I_\top)]\\
&-\left.\Diff{\alpha} \Exp\left[L_z(x) : x \from \Normal(\mu_\bot +\alpha \mu_\top, \sigma_\bot^2 I_\bot + \sigma_\top^2 I_\top) \right]\right|_{\alpha = 1}\\
&-\left.\Diff{\alpha} \Exp\left[L_z(x) : x \from \Normal(\mu_\bot + \mu_\top, \alpha^2\sigma_\bot^2 I_\bot + \sigma_\top^2 I_\top) \right]\right|_{\alpha = 1}\\
&-\left.\Diff{\alpha} \Exp\left[L_z(x) : x \from \Normal(\mu_\bot + \mu_\top, \sigma_\bot^2 I_\bot + \alpha^2\sigma_\top^2 I_\top) \right]\right|_{\alpha = 1}
\end{align*}
\end{lemma}

\begin{proof}
Let $f^\ast\leq z$ be the function value at the global minimum $x=0$. By the star-convexity of $f$ with the origin as the star center, $(f(\alpha x)-f^\ast) \ge \alpha (f(x)-f^\ast)$ for all $\alpha > 1$, which implies $(f(\alpha x)-z) \ge \alpha (f(x)-z)$, and thus $\log (f(\alpha x)-z)\ge \log\alpha+\log (f(x)-z)$. Thus the corresponding inequality holds for $L_z(x)$, provided, $f(x)-z\in (\epsilon',2B)$ and $\alpha$ is close enough to 1, so that $L_z$ behaves like $\log f(x)$:
\begin{align*}
\Exp\left[L_z(x) : x \from \Normal(\alpha \mu, \alpha^2\sigma_\bot^2 I_\bot + \alpha^2\sigma_\top^2 I_\top) \right] \geq\;&\Pr[f(x)-z\in (\epsilon',2B) :x \from \Normal(\mu, \sigma_\bot^2 I_\bot + \sigma_\top^2 I_\top)] \cdot   \log \alpha\\&+\Exp\left[L_z(x) : x \from \Normal(\mu, \sigma_\bot^2 I_\bot + \sigma_\top^2 I_\top) \right]\end{align*}

Consider the left hand side as a function $g(\alpha)$; rearranging to put the rightmost term on the left hand side, the inequality says that \begin{equation}\label{eq:prob}g(\alpha)-g(1)\geq \Pr[f(x)-z\in (\epsilon',2B) :x \from \Normal(\mu, \sigma_\bot^2 I_\bot + \sigma_\top^2 I_\top)] \cdot   \log \alpha\end{equation}

By Proposition~\ref{prop:Sampling}, $g$ has a derivative at $\alpha=1$, which equals $\lim_{\alpha\to 1}\frac{g(\alpha)-g(1)}{\alpha-1}=\lim_{\alpha\to 1}\frac{g(\alpha)-g(1)}{\log \alpha}$ (by L'H\^{o}pital's rule), which by Equation~\ref{eq:prob} is thus greater than or equal to the probability $\Pr[f(x)-z\in (\epsilon',2B) :x \from \Normal(\mu, \sigma_\bot^2 I_\bot + \sigma_\top^2 I_\top)]$.

Therefore
$$ \left.\Diff{\alpha} \Exp\left[L_z(x) : x \from \Normal(\alpha \mu, \alpha^2\sigma_\bot^2 I_\bot + \alpha^2\sigma_\top^2 I_\top) \right]\right|_{\alpha = 1} \ge \Pr[f(x)-z\in (\epsilon',2B)  : x \from \Normal(\mu, \sigma_\bot^2 I_\bot + \sigma_\top^2 I_\top)] $$
Rewriting the left hand side of the above inequality into the sum of the four derivatives in the lemma statement gives the result required.
\end{proof}

As explained above, if we are able to lower bound the left hand side of Lemma~\ref{lem:CutLB}, then we will be able to make a cut to the ellipsoid that contains the global optimum, while being perpendicular to the thin dimensions.
Therefore, we need to upper bound the three derivatives on the right hand side of the inequality of Lemma~\ref{lem:CutLB}.
Lemmas~\ref{lem:MovingInThinDimension}, \ref{lem:Isoperimetric}, and \ref{lem:SigmaTopExpectation} bound these three terms respectively.

In the following, we define the radius $\tau'=\tau\frac{16}{\perr}\cdot\log\left(\frac{2B}{\epsilon'}\right)\frac{2\sqrt{2}}{\sqrt{\pi}}$ to be slightly larger than $\tau$ (see Definition~\ref{def:Parameters} below for details).
Lemma~\ref{lem:MovingInThinDimension} considers widths $\sigma_\top\geq\tau'$, in line with the operation of our Algorithm~\ref{alg:SingleCut}.

The next two lemmas directly bound the first two derivative terms on the right in the expression of Lemma~\ref{lem:CutLB}, via a direct calculation of how fast the average of an arbitrary function can change with respect to the Gaussian parameters.

\begin{lemma}
\label{lem:MovingInThinDimension}
For all $z, \mu_\bot, \sigma_\bot$ and all $||\mu_\top||_2\leq\tau$ and $\sigma_\top\geq\tau'$ we have:
$$ \left|\left.\Diff{\alpha} \Exp\left[L_z(x) : x \from \Normal(\mu_\bot +\alpha \mu_\top, \sigma_\bot^2 I_\bot + \sigma_\top^2 I_\top) \right]\right|_{\alpha = 1}\right| \le \frac{\perr}{16}
$$
\end{lemma}

\begin{proof}
Given $\mu_\bot$, $\sigma_\bot$ and $\sigma_\top \ge \tau'$, define $P_{\mu_\top}$ to be the probability measure of the distribution $\Normal(\mu_\bot + \mu_\top, \sigma_\bot^2 I_\bot + \sigma_\top^2 I_\top)$. We have
\begin{align*}
\text{L.H.S.} &= \lim_{\Delta_\alpha \to 0} \frac{1}{|\Delta_\alpha|} \left[\left|\int_{\Real^n} L_z(x) (\d P_{(1+\Delta_\alpha)\mu_\top} - \d P_{\mu_\top})\right|\right]\\
&\le\lim_{\Delta_\alpha \to 0} \frac{1}{|\Delta_\alpha|} \left[\int_{\Real^n} \left|L_z(x)\right| \left|\d P_{(1+\Delta_\alpha)\mu_\top} - \d P_{\mu_\top}\right|\right]\\
&\le\lim_{\Delta_\alpha \to 0} \frac{1}{|\Delta_\alpha|} \left[\int_{\Real^n} \log\left(\frac{2B}{\epsilon'}\right) \left|\d P_{(1+\Delta_\alpha)\mu_\top} - \d P_{\mu_\top}\right|\right] \text{ since $[\log\epsilon',\log 2B]$ contains both $L_z(x)$ and 0}\\
&\le\lim_{\Delta_\alpha \to 0} \frac{1}{|\Delta_\alpha|} \log\left(\frac{2B}{\epsilon'}\right) \frac{2\sqrt{2}}{\sqrt{\pi}}\frac{||\Delta_\alpha\mu_\top||_2}{\sigma_\top} \text{\quad (total variation distance between Gaussians)}\\
&= \log\left(\frac{2B}{\epsilon'}\right)\frac{2\sqrt{2}}{\sqrt{\pi}}\frac{||\mu_\top||_2}{\sigma_\top} \quad\le \frac{\perr}{16} \text{\quad by the definition of $\tau'$}
\end{align*}
\end{proof}

\begin{lemma}
\label{lem:Isoperimetric}
For any $z, \mu, \sigma_\bot$ and $\sigma_\top$ we have the following inequality:
\begin{align*}
\left|\left.\Diff{\alpha} \Exp\left[L_z(x) : x \from \Normal(\mu, \alpha^2\sigma_\bot^2 I_\bot + \sigma_\top^2 I_\top) \right]\right|_{\alpha = 1}\right| &\le \sqrt{2\dim(\bot)}\log\left(\frac{2B}{\epsilon'}\right)\\
&\le \sqrt{2n}\log\left(\frac{2B}{\epsilon'}\right)
\end{align*}
where $\dim(\bot)$ is the number of $\bot$ dimensions.
\end{lemma}

\begin{proof}
This proof is analogous to that of Lemma \ref{lem:MovingInThinDimension}, making use of Pinsker's inequality to bound the total variation distance between Gaussians of different variances, via their KL divergence.
\end{proof}

For our overall strategy of bounding each of the three terms in the right hand side of the inequality of Lemma~\ref{lem:CutLB}: the bound of Lemma~\ref{lem:MovingInThinDimension} we use as is, however the bound of Lemma~\ref{lem:Isoperimetric} is somewhat larger than 1, and needs to be improved (in order to ultimately compare it with $\perr$). We accomplish this with an averaging argument, where we sample the mean of our Gaussian from a somewhat larger Gaussian, which will effectively decrease the expectation of the left hand side of the expression in Lemma~\ref{lem:Isoperimetric} by the ratio of the variances of the two Gaussians. We analyze the effect of this ``double-sampling" process on the relevant quantities in Lemma~\ref{lem:DoubleSampling}.
\begin{lemma}
\label{lem:DoubleSampling}
For all $z, \mu_\top, \sigma_\top, \mu_\bot$ and $\sigma_\bot < \sigma_\bot'$ we have the following two equalities:
\begin{align*}
&\Exp[\Pr[f(x)-z \in (\epsilon',2B) : x \from \Normal(\mu_\bot' + \mu_\top, \sigma_\bot^2 I_\bot + \sigma_\top^2 I_\top)] : \mu_\bot' \from \Normal(\mu_\bot, (\sigma_\bot'^2 - \sigma_\bot^2)I_\bot)]\\
=\;&\Pr[f(x)-z \in (\epsilon',2B) : x \from \Normal(\mu_\bot + \mu_\top, \sigma_\bot'^2 I_\bot + \sigma_\top^2 I_\top)]
\end{align*}
and\begin{align*}
&\Exp\left[\left.\Diff{\alpha} \Exp\left[L_z(x) : x \from \Normal(\mu_\bot' + \mu_\top, \alpha^2\sigma_\bot^2 I_\bot + \sigma_\top^2 I_\top) \right]\right|_{\alpha = 1} : \mu_\bot' \from \Normal(\mu_\bot, (\sigma_\bot'^2 - \sigma_\bot^2)I_\bot)\right]\\
=\;&\left(\frac{\sigma_\bot}{\sigma_\bot'}\right)^2\left.\Diff{\alpha} \Exp\left[L_z(x) : x \from \Normal(\mu_\bot + \mu_\top, \alpha^2\sigma_\bot'^2 I_\bot + \sigma_\top^2 I_\top) \right]\right|_{\alpha = 1}
\end{align*}
\end{lemma}

\begin{proof}
Because adding two Gaussian random variables produces a Gaussian random variable, the distribution
$$ (x \from \Normal(\mu_\bot' + \mu_\top, \alpha^2\sigma_\bot^2 I_\bot + \sigma_\top^2 I_\top) : \mu_\bot' \from \Normal(\mu_\bot, (\sigma_\bot'^2 - \sigma_\bot^2)I_\bot)) $$
is equivalent to the distribution $\Normal(\mu_\bot + \mu_\top, (\sigma_\bot'^2 + (\alpha^2-1)\sigma_\bot^2)I_\bot + \sigma_\top^2 I_\top)$, which gives the first equality.
To prove the second equality, we move the derivative outside the expectation, and combine the two expectations into the equivalent expression \[
 \Exp\left[L_z(x) : x \from \Normal(\mu_\bot + \mu_\top, (\sigma_\bot'^2+(\alpha^2-1)\sigma_\bot^2) I_\bot + \sigma_\top^2 I_\top) \right]
\]
Further observe that the derivatives of the $\alpha$ dependencies in the two expressions we are comparing are $\left.\Diff{\alpha} (\sigma_\bot'^2 + (\alpha^2-1)\sigma_\bot^2)\right|_{\alpha = 1}$ and $\left.\Diff{\alpha} \alpha^2 \sigma_\bot'^2\right|_{\alpha = 1}$, which have ratio exactly $(\sigma_\bot/\sigma_\bot')^2$, giving the second equality.
\end{proof}

Having bounded the first two ``confounding derivatives" from the right hand side of Lemma~\ref{lem:CutLB}, we now bound the third. We cannot directly bound this derivative, so we again employ an averaging argument. Intuitively, this term records how $L_z$ increases as the width $\sigma_\top$ increases; however, since $L_z$ is bounded between $\ferr'$ and $2B$, the derivative cannot stay large over a large range of $\sigma_\top$. Crucially, we let $\sigma_\top$ vary over a huge range outside the ellipsoid, between $\tau'>\tau$ and $R/\safety$ (with $\safety$ a polynomial factor, defined in Definition~\ref{def:Parameters}).
In order to give $\sigma_\top$ a large enough range for the following bound to be meaningful, $\tau'$ (and hence $\tau$, the threshold for ``thinness") has to be exponentially small, as specified below in Definition~\ref{def:Parameters}.
\begin{lemma}
\label{lem:SigmaTopExpectation}
For all $z, \mu, \sigma_\bot$,
\begin{align*}
&\Exp\left[\left.\Diff{\alpha} \Exp\left[L_z(x) : x \from \Normal(\mu_\bot + \mu_\top, \sigma_\bot^2 I_\bot + \alpha^2\sigma_\top^2 I_\top) \right]\right|_{\alpha = 1} : (\sigma_\top = e^X; X \from \Unif[\log \tau', \log R/\safety])\right]\\
\le\;&\frac{\log 2B - \log \epsilon'}{\log (R/\safety) - \log \tau'}
\end{align*}
\end{lemma}

\begin{proof}
The probability density function of $\sigma_\top$ is $f_{\sigma_\top}(y) = \frac{1}{y}\cdot\frac{1}{\log (R/\safety) - \log \tau'}$.
Let $$ h(\sigma_\top) = \Exp\left[L_z(x) : x \from \Normal(\mu_\bot + \mu_\top, \sigma_\bot^2 I_\bot + \sigma_\top^2 I_\top) \right]. $$
The following quantity thus equals the left hand side of the inequality in the lemma statement without the scaling factor $\frac{1}{\log (R/\safety) - \log \tau'}$:
$$ \int_{\tau'}^{R/\safety} \frac{1}{y} \left.\Diff{\alpha} h(\alpha y)\right|_{\alpha = 1} \; \d y = \int_{\tau'}^{R/\safety} \Diff{y} h(y) \; \d y = h(R/\safety) - h(\tau') \le \log 2B - \log \epsilon' $$
since $L_z(x)$ and hence $h(\sigma_\top)$ is bounded between $\log \epsilon'$ and $\log 2B$.
The lemma statement follows.
\end{proof}

Recalling that in Lemma~\ref{lem:MovingInThinDimension} we satisfactorily bounded the second term from the right hand side of the inequality of Lemma~\ref{lem:CutLB}, we now combine the results of the previous lemmas so that we may use Markov's inequality to bound the sum of the remaining three terms. The following lemma makes an assumption that the probability that $f(x)-z \notin (\ferr',2B)$ is bounded away from 1; in the context of our Algorithm~\ref{alg:SingleCut}, the case where this assumption is false turns out to imply that our algorithm has already successfully optimized the function---despite this algorithm being intended merely to seek another cut. This result is shown in Lemma~\ref{lem:Victory}.

\begin{lemma}
\label{lem:MarkovGadget}
Given $z, \mu_\top$ and $\sigma_\bot$, define an auxiliary function
\begin{align*}
g(\mu_\bot', \sigma_\top) &= \Pr[f(x)-z \in (\epsilon',2B) : x \from \Normal(\mu_\bot' + \mu_\top, \sigma_\bot^2 I_\bot + \sigma_\top^2 I_\top)]\\
&-\left.\Diff{\alpha} \Exp\left[L_z(x) : x \from \Normal(\mu_\bot' + \mu_\top, \alpha^2\sigma_\bot^2 I_\bot + \sigma_\top^2 I_\top) \right]\right|_{\alpha = 1}\\
&- \left.\Diff{\alpha} \Exp\left[L_z(x) : x \from \Normal(\mu_\bot' + \mu_\top, \sigma_\bot^2 I_\bot + \alpha^2\sigma_\top^2 I_\top) \right]\right|_{\alpha = 1}
\end{align*}
Suppose we have $\mu_\bot$ and $\sigma_\bot' > \sigma_\bot$ such that for all $\sigma_\top\in [\tau',R/\safety]$ we have $\Pr[f(x)-z \in (\epsilon',2B) : x \from \Normal(\mu_\bot + \mu_\top, \sigma_\bot'^2 I_\bot + \sigma_\top^2 I_\top)] \ge \frac{22}{32}\perr$.
Thus, letting $D$ be the joint distribution of independent random variables $\mu_\bot'\from\Normal(\mu_\bot, (\sigma_\bot'^2 - \sigma_\bot^2)I_\bot)$ and $(\sigma_\top = e^X; X \from \Unif[\log \tau', \log R/\safety])$, we have that
$$ \Pr[g(\mu_\bot', \sigma_\top) > \frac{1}{2}E : (\mu_\bot', \sigma_\top) \from D] \ge \frac{E}{2(1+2\sqrt{2n}\log\frac{2B}{\epsilon'})} $$
where
$$ E = \frac{22}{32}\perr - \left(\frac{\sigma_\bot}{\sigma_\bot'}\right)^2\sqrt{2n}\log\left(\frac{2B}{\epsilon'}\right) - \frac{\log 2B - \log \epsilon'}{\log (R/\safety) - \log \tau'}$$
\end{lemma}

\begin{proof}
We bound the expectations of the individual terms in $g(\mu_\bot,\sigma_\top)$ when we draw $(\mu_\bot,\sigma_\top)\from D$.
For the first term, we make use of Lemma~\ref{lem:DoubleSampling} to simplify the double expectations.
\begin{align*}
&\Exp[\Pr[f(x)-z \in (\epsilon',2B) : x \from \Normal(\mu, \sigma_\bot^2 I_\bot+\sigma_\top^2 I_\top)] : (\mu_\bot, \sigma_\top) \from D]\\
=\;&\Exp[\Pr[f(x)-z \in (\epsilon',2B) : x \from \Normal(\mu_\bot' + \mu_\top, \sigma_\bot'^2 I_\bot + \sigma_\top^2 I_\top)] : (\sigma_\top = e^X; X \from \Unif[\log \tau', \log (R/\safety)])]\\
\ge\;& \frac{22}{32}\perr \text{\quad by assumption.}
\end{align*}
The other two terms are bounded by Lemmas \ref{lem:Isoperimetric} (via Lemma~\ref{lem:DoubleSampling}) and \ref{lem:SigmaTopExpectation} respectively, yielding that $\Exp[g(\mu_\bot, \sigma_\top):(\mu_\bot, \sigma_\bot) \from D]\ge E$. Having bounded the expectation, we now upper bound $g$ so that we can apply Markov's inequality. The first term of $g$ is a probability and hence is upper bounded by 1; the next two terms are each upper bounded via Lemma~\ref{lem:Isoperimetric} by $\sqrt{2n}\log\frac{2B}{\epsilon'}$. Thus $g(\mu_\bot, \sigma_\top)\leq 1+2\sqrt{2n}\log\frac{2B}{\epsilon'}$.

Markov's inequality yields that $\Pr[g(\mu_\bot, \sigma_\top)>\frac{E}{2}]>\frac{E}{2(1+2\sqrt{2n}\log\frac{2B}{\epsilon'})}$.
\end{proof}

Using Lemma~\ref{lem:MarkovGadget} and the previous lemmas, we have successfully upper bounded the confounding derivatives and hence lower bounded the left hand side of Lemma~\ref{lem:CutLB}, but only when the probability term is bounded away from 0 for all $\sigma_\top$.
As we mentioned above, in fact, if the probability term is too small for some $\sigma_\top$, then it turns out we already have a Gaussian region that can be returned as the optimization output.
Intuitively, if $f$ is ``flat'' (except on a small fraction of the points) on a large Gaussian region that is known to be near the global minimum (``near", relative to the size of the Gaussian), then the function value on this Gaussian is essentially the global minimum, and thus this Gaussian region may be returned as the overall answer to the optimization problem.
The next two lemmas formalize this result.

\begin{lemma}\label{lem:tail}
  Given a probability distribution on the positive real line with pdf proportional to $f(x)=e^{-(x-c)^2/2}x^{n-1}$, for $c\geq 0$ then any set with probability at least $0.9$ under this distribution contains two points with ratio at least $1+1/(6\max\{c,\sqrt{n}\})$.
\end{lemma}
\begin{proof}

  The pdf has maximum at $x$ value $m=\frac{c+\sqrt{c^2+4(n-1)}}{2}$ (as can be found by differentiating the logarithm of $f$). We upper bound $f$ by $f(x)\leq f(m)\cdot e^{-(x-m)^2/2}$, since, taking the logarithm of both sides, the function value and first derivative match at $x=m$ (the derivative is 0), while the second derivative of the (logarithm of the) right hand side equals $-1$ while the left hand side yields \begin{equation}\label{eq:logf}\frac{\d^2}{\d x^2}\log f(x) =-1-\frac{\d^2}{\d x^2}\log x^{n-1}=-1-\frac{n-1}{x^2}\end{equation} which is smaller than $-1$ for all $x$. Thus the integral of $f$ from 0 to $\infty$ is at most $f(m)$ times the integral of the Gaussian $e^{-(x-m)^2/2}$, which is at most $f(m)\sqrt{2\pi}$.

  We correspondingly lower bound $f(x)$. We note the value $m$ (specifying the $x$ coordinate of the maximum of $f$) is bounded as $m=\frac{c+\sqrt{c^2+4(n-1)}}{2}\geq \sqrt{n-1}$. Thus for $x\geq m$, Equation~\ref{eq:logf} yields the bound $\frac{\d^2}{\d x^2}\log f(x)\geq -2$. Analogously to above this yields the lower bound that $f(x)\geq f(m) e^{-(x-m)^2}$ for $x\geq m$---where the standard deviation of this Gaussian is now $1/\sqrt{2}$ instead of 1.

  Given this lower bound $f(x)\geq f(m) e^{-(x-m)^2}$ for $x\geq m$ and the upper bound $f(m)\sqrt{2\pi}$ on the overall integral of $f$, the probability of $x$ lying in the interval $[m,m+\frac{1}{3}]$ is at least $$\frac{1}{f(m)\sqrt{2\pi}}\int_{m}^{m+1/3} f(m) e^{-(x-m)^2}\,\d x > 0.1$$ Similarly, the probability of $x$ lying in the interval $[m+\frac{2}{3},\infty)$ is greater than $0.1$.

  Thus, any set $S$ of probability at least $0.9$ on the distribution with pdf proportional to $f$ must contain points from \emph{both} intervals $[m,m+\frac{1}{3}]$ and $[m+\frac{2}{3},\infty)$. Hence $S$ must contain two points with ratio at least $\frac{m+2/3}{m+1/3}=1+\frac{1}{3m+1}$.

  We conclude the lemma by bounding $3m+1$ in terms of $\max\{c,\sqrt{n}\}$. Let $y=\max\{c,\sqrt{n}\}$, which is at least $1$ for $n\geq 1$. We have $3m+1=1+\frac{3}{2}(c+\sqrt{c^2+4(n-1)})\leq 1+\frac{3}{2}(y+\sqrt{5y^2-4})$, which is easily seen to be at most $6y$ for $y\geq 1$. Thus $S$ contains two points with ratio at least $1+\frac{1}{3m+1}\geq 1+1/(6\max\{c,\sqrt{n}\})$, as desired.

\end{proof}

\begin{lemma}
\label{lem:Victory}
Given a star-convex function $f$ with global optimum at the origin, if for some location $\mu$ we have $\Pr[f(x) \in [z, z+\ferr'] : x \from \Normal(\mu, I)] > 0.95$ then
the function value at the global optimum, $f^\ast=f(0)$, satisfies $f^\ast\geq z-6\epsilon'\max\{||\mu||_2,\sqrt{n}\}$.

\end{lemma}
\begin{proof}
  Consider a ray through the global minimum (at the origin), and let the ray be defined as all positive multiples of a unit vector $x$. Consider the set $C_x\subset\mathbb{R}^+$ of scaling factors $r$ such that $f(rx)\in[z,z+\ferr']$. We note that if $C_x$ contains two values $r<r'$ with some ratio $\kappa=r'/r$ then by the star-convexity of $f$ (on this ray), the global minimum of $f$ must have value at least $z-\ferr'/(\kappa-1)$. Thus we want to show that there exists a direction $x$ with a set $C_x$ that contains two widely--spaced elements; namely, it is impossible for each $C_x$ to only contain values within small ratio of each other.

  We note that we may express $Pr[f(x)\in[z,z+\ferr']:x\leftarrow\mathcal{N}(\mu,I)]$ in terms of the sets $C_x$ as an integral in polar coordinates. Letting $\mathbb{S}^{n-1}$ denote the $(n-1)$-dimensional sphere, this probability equals the ratio
  \begin{equation}\label{eq:integralRatio} \left.\int_{x\in \mathbb{S}^{n-1}} \int_{r\in C_x} r^{n-1}e^{-\frac{|rx-\mu|^2}{2}}\,\d r\,\d x\right/\int_{x\in \mathbb{S}^{n-1}} \int_{r\in \mathbb{R}^+} r^{n-1}e^{-\frac{|rx-\mu|^2}{2}}\,\d r\,\d x \end{equation}
  Consider those directions $x$ with a positive component in the direction of $\mu$; this halfspace $H$ comprises at least half the probability mass of the Gaussian $\mathcal{N}(\mu,I)$, and thus, within this halfspace, $\Pr[f(x) \in [z, z+\ferr'] : x \from \Normal(\mu, I) \,|\, x\in H] > 0.9$. Suppose for the sake of contradiction that for each $x\in H$, defining $c<||\mu||_2$  to be the component of $\mu$ in the direction of $x$ (where $c\geq 0$), the set $C_x$ does not contain any points of ratio at least $1+1/(6\max\{c,\sqrt{n}\})$. Thus by Lemma~\ref{lem:tail}, we have the bound $\left.\int_{r\in C_x} r^{n-1}e^{-(r-c)^2/2}\,\d r\right/\int_{r\in \mathbb{R}^+} r^{n-1}e^{-(r-c)^2/2}\,\d r< 0.9$. For each $x$, with $c$ as defined, the ratio between these integrals equals the ratio between the corresponding inner integrals of Equation~\ref{eq:integralRatio}, $ \left.\int_{r\in C_x} r^{n-1}e^{-\frac{|rx-\mu|^2}{2}}\,\d r\right/\int_{r\in \mathbb{R}^+} r^{n-1}e^{-\frac{|rx-\mu|^2}{2}}\,\d r$, yielding that the average value of this ratio over $x$ in the halfspace $H$ is at most $0.9$. This contradicts the fact derived earlier that $\Pr[f(x) \in [z, z+\ferr'] : x \from \Normal(\mu, I) \,|\, x\in H] > 0.9$.

  Therefore there must exist a direction $x$ such that $C_x$ contains two points of ratio at least $1+1/(6\max\{c,\sqrt{n}\})$, where since $c\leq ||\mu||_2$, this ratio is thus at least $1+1/(6\max\{||\mu||_2,\sqrt{n}\})$. We then conclude $f^\ast\geq z-6\epsilon'\max\{||\mu||_2,\sqrt{n}\}$.
\end{proof}

We now wish to combine Lemmas~\ref{lem:MarkovGadget} and~\ref{lem:Victory}, in the sense that we reason either one or the other applies.
The na\"{i}ve interpretation of the above would involve algorithmically checking the condition of Lemma~\ref{lem:Victory} for each and every Gaussian induced by all the $\sigma_\top$ in the continuous (exponential) range as in Lemma~\ref{lem:MarkovGadget}, which is obviously impossible.
Therefore, we need to choose a mesh in that range such that the total variation distance between a Gaussian with a width $\sigma_\top$ in the $\top$ dimensions and the closest Gaussian in the mesh is upper bounded by some quantity proportional to $\perr$.
This allows us to reason that after sampling all points in the mesh, either Lemma~\ref{lem:Victory} applies, in which case we can conclude our optimization, or we can apply Lemma~\ref{lem:MarkovGadget} to produce a cut.
The following lemma shows that a geometric spacing in the mesh suffices, thus explaining Step 1 of Algorithm~\ref{alg:SingleCut}.

\begin{lemma}
\label{lem:Pinsker}
Given $\mu$ and $\sigma_\bot$, let $P_{\sigma_\top}$ be the probability measure of the distribution $\Normal(\mu, \sigma_\bot^2 I_\bot + \sigma_\top^2 I_\top)$.
For $\sigma_\top < \sigma_\top'$, we have the inequalities
$$ \frac{1}{2}|P_{\sigma_\top} - P_{\sigma_\top'}| \le \sqrt{\frac{1}{2}D_{KL}(P_{\sigma_\top} || P_{\sigma_\top'})} \le \sqrt{\frac{\dim(\top)}{2}\log\left(\frac{\sigma_\top'}{\sigma_\top}\right)} \le \sqrt{\frac{n}{2}\log\left(\frac{\sigma_\top'}{\sigma_\top}\right)} $$
where $\frac{1}{2}|P_{\sigma_\top} - P_{\sigma_\top'}|$ denotes the \emph{total variation distance} and $D_{KL}$ denotes the \emph{KL divergence}.
That is, if $\frac{\sigma_\top'}{\sigma_\top} \le e^{\frac{2p^2}{n}}$, then the difference in probabilities of event $E$ happening under the two distributions is at most $p$.
\end{lemma}

\begin{proof}
The first inequality is Pinsker's inequality.
A direct calculation from the standard expression for the KL-divergence of multivariate Gaussians shows the second inequality.
The third inequality is self-evident.
\end{proof}

We now have all the theoretical tools to reason about Algorithm~\ref{alg:SingleCut}.
In the following definition, we choose all the quantities we have used in the paper.
\begin{definition}\label{def:Parameters}
\begin{alignat*}{4}
\text{Let }&\safety = \sqrt{n}\left(1+ \sqrt{\frac{4}{3}}\sqrt{n + \frac{1}{\perr} + \log \frac{1}{\ferr} + \log B + \log R + \log \frac{1}{F}}\right) \hspace{1cm}&& \hspace{2cm} \sigma_\bot' = \frac{1}{3n\safety} \hspace{1cm} \hspace{-2cm}&& \\
&\ferr' = \ferr\left(1 + \frac{12}{\sigma_\bot'}\right)^{-1} &&\hspace{-5cm}\sigma_\bot = \sigma_\bot'\sqrt{\frac{\perr/8}{\log 2B - \log \ferr'}\sqrt{\frac{1}{2n}}} && \hspace{-1cm}\tau' = \frac{R}{\safety}\left(\frac{2B}{\ferr'}\right)^{-\frac{16}{\perr}}\\
&\tau=\tau'\frac{\perr}{16}\cdot\frac{1}{\log\left(\frac{2B}{\epsilon'}\right)}\frac{\sqrt{\pi}}{2\sqrt{2}} &&\hspace{-5cm} \eta = e^{\frac{\perr^2}{8n}} && \hspace{-1cm} k = \frac{\log \left(\frac{2B}{\ferr'}\right)^{\frac{16}{\perr}}}{\log \eta}
\end{alignat*}
\end{definition}

Finally, we present the proof of Proposition~\ref{prop:Alg1Correctness} (restated below for convenience), which establishes the correctness of Algorithm~\ref{alg:SingleCut}.

\medskip\noindent{\bf Proposition~\ref{prop:Alg1Correctness}} (Correctness of Algorithm~\ref{alg:SingleCut}){\bf .} \emph{With negligible probability of failure, Algorithm \ref{alg:SingleCut} either a) returns a Gaussian region $\mathcal{G}$ such that $\Pr[f(x) \le f^\ast+\ferr : x \from \mathcal{G}] \ge 1 - \perr$, or b) returns a direction $\vec{d}_\bot$, restricted to the $\bot$ dimensions, such that when normalized to a unit vector $\hat{\vec{d}}_\bot$, the cut $\{x:x\cdot \hat{\vec{d}}_\bot\leq \frac{1}{3n}\}$ contains the global minimum.
}

\begin{proof}
To keep the proof simpler, we now translate the coordinate system to place the \emph{global optimum} at the origin. If the probability error parameter $\perr\ge\frac{1}{20}$, then make $\perr=\frac{1}{21}$, which can only improve the results.

If the algorithm halts in Step 1b, this means that for the returned Gaussian $\mathcal{G}_i$,  at least $1-\frac{31}{32}\perr$ fraction of the samples were within a range $[z_i,z_i+\ferr']$, denoting by $z_i$ the smallest observed sample from $\mathcal{G}_i$. Chernoff bounds imply that (except with probability of failure $F$), the true probability $\Pr[f(x)\in[z_i,z_i+\ferr']:x\from\mathcal{G}_i]$ is at least $1-\perr$, provided we take $S=\poly(1/\delta,|\log F|)$ samples.

Thus we apply Lemma~\ref{lem:Victory}, since $\delta<\frac{1}{20}$, and conclude that $f^\ast\geq z_i-6\epsilon'\max\{||\mu||_2,\sqrt{n}\}$, where we must now bound ``$\mu$", which from the notation of Lemma~\ref{lem:Victory} means the ``distance between the global minimum and the center of the ellipsoid, measured by the number of standard deviations of Gaussian $\mathcal{G}_i$". Specifically, in the $\bot$ dimensions, the ellipsoid has radius 1 and the Gaussian $\mathcal{G}_i=\Normal(0, \sigma_\bot'^2 I_\bot + \tau'^2\eta^{2i} I_\top)$ has radius $\sigma_\bot'$; in the $\top$ dimensions, the Gaussian has radius larger than the ellipsoid, so the distance in these directions is less than 1, which is certainly less than $\frac{1}{\sigma_\bot'}$. Thus in total we bound $||\mu||_2\leq \frac{2}{\sigma_\bot'}$. Plugging this bound into the conclusion of Lemma~\ref{lem:Victory}, where $||\mu||_2\geq\sqrt{n}$, we have $f^\ast\geq z_i-6\epsilon'\frac{2}{\sigma_\bot'}\geq z_i+\ferr'-\ferr$, by the definition of $\ferr'$. Thus, the returned Gaussian satisfies the desired properties of the output of our optimization algorithm: $\Pr[f(x) \in [f^\ast, f^\ast+\ferr] : x \from \mathcal{G}_i] \ge 1-\delta$.

We now analyze the situation when the algorithm does \emph{not} halt in Step 1b. In this case, for each $i$, the proportion of samples larger than $z_i+\ferr'$ was observed to be at least $\frac{31}{32}\perr$. Since $z=\min_i z_i$, the proportion larger than $z+\ferr'$ must also be at least $\frac{31}{32}\perr$. As above, by Chernoff bounds, for each $i$ (except with probability of failure $F$), the true probability $\Pr[f(x)>z+\ferr':x\from\mathcal{G}_i]$ is at least $\frac{30}{32}\perr$, provided we take $S=\poly(1/\delta,|\log F|)$ samples.

Since as $i$ ranges from $0$ to $k$, the Gaussians $\mathcal{G}_i$ vary exponentially in their width in the $\top$ dimensions, as $\sigma_{\top,i}=\tau'\eta^i$, these widths form a fine (exponentially spaced) mesh over this entire region from $\tau'$ to $\frac{R}{\safety}$. Thus, for \emph{any} $\sigma_\top'\in [\tau',\frac{R}{\safety}]$, there is a an $i$ for which the Gaussian $\mathcal{G}_i$ has width $\sigma_{\top,i}$ such that $1\leq \frac{\sigma_\top'}{\sigma_{\top,i}}\leq \eta$. Thus for the Gaussian $\mathcal{G}'=\Normal(0, \sigma_\bot'^2 I_\bot + \sigma_\top'^2 I_\top)$ we invoke Lemma~\ref{lem:Pinsker} to conclude that the difference in the probability of $f(x)>z+\ferr'$ between $x\from \mathcal{G}_i$ and $x\from \mathcal{G}'$ is at most $\sqrt{\frac{n}{2}\log\eta}=\frac{\perr}{4}$.

Thus for \emph{any} $\sigma_\top'\in [\tau',\frac{R}{\safety}]$, letting $\mathcal{G}'=\Normal(0, \sigma_\bot'^2 I_\bot + \sigma_\top'^2 I_\top)$, the probability $\Pr[f(x)>z+\ferr':x\from\mathcal{G}']$ is at least $\frac{22}{32}\perr$. (Namely, given that the algorithm did not halt in Step 1b, we have a guarantee that holds over an \emph{exponentially} wide range of widths $\sigma_\bot'$, despite only taking a polynomial ($k+1$) number of iterations to check, and a polynomial ($S$) number of samples from $f$ per iteration.)

In order to apply Lemma~\ref{lem:MarkovGadget}, we need the tiny variant of the above claim, where instead of bounding the probability that $f(x)>z+\ferr'$, we instead need a bound on the probability that $f(x)-z \in (\ferr',2B)$. However, $B$ was chosen to be a truly huge number, such that we have the global guarantee of Definition~\ref{def:boundedness} that for all $x$ within distance $10nR$ of the origin, $|f(x)|\leq B$. We consider any case where our algorithm evaluates the function outside this ball to be a \emph{failure} of the algorithm. Because by the condition of Step 2e of Algorithm~\ref{alg:Ellipsoid}, the ellipsoid under consideration in Algorithm~\ref{alg:SingleCut} has semi-principal axes of length at most $3nR$, and its center lies within $R$ of the (original) origin, each point in the ellipsoid has distance at most $(3n+1)R$ from the origin. Further, by construction, each Gaussian has standard deviation in the $\bot$ dimensions bounded by the size of the ellipsoid over $s$, and standard deviation in the $\top$ dimensions at most $\frac{R}{s}$, where $s$ is chosen so that, over the entire course of the algorithm, no Gaussian sample will ever be more than $s$ standard deviations from its mean, except with negligible probability. Thus, except with negligible probability, all samples from the algorithm are in the region within $10nR$ of the origin (as a very loose bound), and thus have function value $|f(x)|\leq B$. Having analyzed this negligible probability of failure, we assume for the rest of this proof that all function evaluations have magnitude less than $B$, and condition all probabilities on the assumption that no failure has occurred.

Thus for any $\sigma_\top'\in [\tau',\frac{R}{\safety}]$ we have $\Pr[f(x)-z \in (\ferr',2B):x\from\mathcal{G}']\geq\frac{22}{32}\perr$, and
we may now invoke Lemma~\ref{lem:MarkovGadget}, in order to describe the properties of the function $g(\mu_\bot',\sigma_\top)$ at the center of Step 3. Using the parameters defined in Definition~\ref{def:Parameters}, the bound $E$ of Lemma~\ref{lem:MarkovGadget} is found to equal $\frac{1}{2}\perr$. Lemma~\ref{lem:MarkovGadget} thus yields (with distribution $D$ as defined in the lemma, and as used in Step 3 of Algorithm~\ref{alg:SingleCut}) that $$\Pr[g(\mu_\bot', \sigma_\top) > \frac{1}{4}\perr : (\mu_\bot', \sigma_\top) \from D] \ge \frac{\perr}{4(1+2\sqrt{2n}\log\frac{2B}{\epsilon'})}$$

In Step 3, we estimate each of the three terms of $g$ via Proposition~\ref{prop:Sampling}, and take enough samples to ensure that, except with negligible probability, our estimate of $g$ is accurate to within $\frac{1}{32}\perr$. Thus for each sampled $\mu_\bot', \sigma_\top$ such that $g(\mu_\bot', \sigma_\top) > \frac{1}{4}\perr$, our estimate of $g(\mu_\bot', \sigma_\top)$ will be at least $\frac{7}{32}\perr$, and thus the condition in Step 3 will succeed.

Each iteration of Step 3 of the algorithm thus succeeds with probability at least $\frac{\perr}{4(1+2\sqrt{2n}\log\frac{2B}{\epsilon'})}$, and Chernoff bounds imply that, except with failure probability $F$, Step 3 will successfully terminate in $\poly(|\log F|,n,1/\perr,\log\frac{2B}{\ferr'})$ many iterations.

Given that our observed value of $g(\mu_\bot', \sigma_\top)$ when Step 3 terminates is at least $\frac{7}{32}\perr$ and our estimates are accurate to within $\frac{1}{32}\perr$, the true value of $g(\mu_\bot', \sigma_\top)$ must be at least $\frac{6}{32}\perr$.

We now show that the gradient $\vec{d}_\bot = \Grad_\bot[L_z(x) : x \from \Normal(\mu_\bot', \sigma_\bot^2 I_\bot + \sigma_\top^2 I_\top)]$ estimated in Step 4 has positive component in the direction away from the global optimum, which will enable us to make a cut. Lemma~\ref{lem:CutLB} shows that the component of the gradient at location $\mu_\bot'$ in the direction away from the global optimum is at least \[g(\mu_\bot', \sigma_\top)-\left.\Diff{\alpha} \Exp\left[L_z(x) : x \from \Normal(\mu_\bot +\alpha \mu_\top, \sigma_\bot^2 I_\bot + \sigma_\top^2 I_\top) \right]\right|_{\alpha = 1}\]

From the above bound $g(\mu_\bot', \sigma_\top)\geq\frac{6}{32}\perr$ and the bound of Lemma~\ref{lem:MovingInThinDimension} the term we subtract from $g$ is at most $\frac{\perr}{16}$, we conclude that the component of the gradient in the direction away from the global optimum is at least $\frac{1}{8}\perr$.

We thus estimate the gradient in Step 4 by estimating the derivative in each of the dimensions in $\bot$ to within error $\frac{1}{16n}\perr$, which guarantees that the total gradient vector has component in the direction away from the global optimum at least $\frac{1}{16}\perr$. (As an implementation detail, we note that Proposition~\ref{prop:Sampling} computes a version of the derivative scaled by $\sigma_\bot$, to any desired accuracy $\kappa$, using samples scaling polynomially with $\frac{1}{\kappa}$. Thus for the unscaled derivative to have accuracy $\frac{1}{16n}\perr$, the scaled derivative must have accuracy $\frac{1}{16n}\perr\cdot\sigma_\bot$, which requires time polynomial to the inverse of this quantity. From the definition of $\sigma_\bot$ in Definition~\ref{def:Parameters}, its inverse is polynomial in the overall parameters of the algorithm.)

Thus our estimate of the gradient, normalized to a unit vector $\hat{\vec{d}}_\bot$, defines a halfspace $\{x:(x-\mu_\bot')\cdot \hat{\vec{d}}_\bot \leq 0\}$ that contains the global optimum. Further, since $\mu_\bot'$ came from a sample $\mu_\bot'\from\Normal(\vec{0}, (\sigma_\bot'^2 - \sigma_\bot^2)I_\bot)$, where $\sigma_\bot'=\frac{1}{3n\safety}$ (in Definition~\ref{def:Parameters}), and $\safety$ was chosen so that no Gaussian sample will ever be more than $s$ standard deviations from its mean, we have that $||\mu_\bot'||_2\leq\frac{1}{3n}$. Thus $\mu_\bot'\cdot\hat{\vec{d}}_\bot\leq \frac{1}{3n}$, from which we conclude that the global optimum is contained in the halfspace $\{x:x\cdot \hat{\vec{d}}_\bot\leq \frac{1}{3n}\}$, as desired.

\end{proof}

\bibliographystyle{plain}
\bibliography{STOC2016}
\newpage
\appendix
\section{The Scope of Star-Convex Functions}\label{sec:examples}
In this section we introduce some basic building blocks to construct large families of star-convex functions, with the aim of demonstrating the richness of the definition. More technical examples of star-convex functions are introduced in the rest of the paper to illustrate particular points (including Examples~\ref{ex:deterministic}, \ref{ex:rationals}, and Example~\ref{ex:output} below).

\medskip\noindent{\bf Basic Star-Convex Functions:}
\begin{itemize}
  \item[1.] Any convex function is star-convex.
  \item[2.] Even in 1 dimension, star-convex functions can be non-convex: \[f(x)=\begin{cases}
    |x| & \text{ if } |x|<1\\
    |2x| & \text{ otherwise}
  \end{cases}\]
  Or, for a Lipschitz twice-differentiable example, $f(x)=|x|(1-e^{-|x|})$ (from Nesterov and Polyak~\cite{Nesterov:2006}).
  \item[3.] In $n>1$ dimensions, take an \emph{arbitrary} positive function $g(x)$ on the unit sphere, and extend it to the origin in a way that is star-convex on each ray through the origin, for example, extending $g$ linearly to define \[f(x)=||x||_2\cdot g\left(\frac{x}{||x||_2}\right)\]
\end{itemize}

\medskip\noindent{\bf Ways to Combine Star-Convex Functions:}
Given any star-convex functions $f,g$ that have star centers at $f(0)=g(0)=0$, we can combine them to generate a new star convex function in the following ways:
\begin{itemize}
  \item[4.] A star-convex function can be shifted in the $x$ and $f$ senses so that its global optimum is at an arbitrary $x$ and $f$ value; in general, for any affine transformation $A$ and real number $c$, the function $h(x)=f(A(x))+c$ is star-convex.
  \item[5.] For any positive power $p\geq 1$, the function $h(x)=f(x)^p$ is star-convex.
  \item[6.] The sum $h(x)=f(x)+g(x)$ is star-convex.
  \item[7.] The product $f(x)\cdot g(x)$ is star-convex.
  \item[8.] The \emph{power mean}: for real number $p$, the function $h(x)=\left(\frac{f(x)^p+g(x)^p}{2}\right)^{1/p}$ is star-convex, defining powers via limits as appropriate. (The case $p=0$ corresponds to $h(x)=\sqrt{f(x)g(x)}$.)
\end{itemize}

\medskip\noindent{\bf Practical Examples of Star-Convex Function Classes:}

We combine the basic examples above to yield the following general classes of star-convex functions.

\begin{itemize}
  \item[9.] Sums of squares of monomials: the square of any monomial $(x^i y^j z^k\cdots)^2$, with $i,j,k,\ldots\in\mathbb{Z}^+$, will be star-convex about the origin, and thus sums of such terms will be star-convex despite typically being non-convex, for example $h(x,y)=x^2y^2+x^2+y^2$. Sums-of-squares arise in many different contexts.
  \item[10.] More generally, any polynomial of $|x|,|y|,|z|,\ldots$ with nonnegative coefficients is star-convex.
  \item[11.] In the standard machine learning setting where one is trying to learn a parameter vector $\theta$: for a given hypothesis $\hat{\theta}$, and each training example $X_i$, the hypothesis gives an error $L_{X_i}(\theta,\hat{\theta})$ that is (typically) a convex function of its second argument, $\hat{\theta}$. Averaging these convex losses over all $i$, via the power mean of exponent $p$ leads to a star-convex loss, as a function of $\hat{\theta}$, with minimum value 0 at $\hat{\theta}=\theta$. The paper by one of the authors~\cite{Valiant:2014} discusses a class of such star-convex loss functions which have the form \[h_{\theta,X}(\hat{\theta})=\left(\sum_i |(\hat{\theta}-\theta)\cdot X_i|^p\right)^{1/p}\]
Note that, for $p \ge 1$, the loss function is convex.
\end{itemize}

\section{Example of a Pathological Star-Convex Function}\label{ap:examples}

See the discussion before Definition~\ref{def:problem}.

\begin{example}\label{ex:output}
We present a class of minimization problems where we argue that no algorithm will be able to return a (rational) point with objective function value near the global minimum, and instead, in order to be effective, an algorithm must return a region on which ``the function value is low with high probability".

  For a given $\perr\in(0,1)$ we define a class of minimization problems in the unit square, parameterized by three uniformly random real numbers $X,Y,\theta\leftarrow [0,1]$. The star-convex function $f(x,y)$ will have unique global minimum at $f(X,Y)=0$, and is defined as \[f(x,y)=|y-Y|+\begin{cases}|x-X| & \mbox{ if } y=Y \mbox { or }frac(angle(X-x,Y-Y)\cdot 10^{10^{100}}+\theta)<\perr\\ 0 & \mbox{ otherwise}\end{cases},\] where  $frac(x)$ denotes the fractional part of $x$, and $angle(x,y)$ denotes the angle of location $(x,y)$ about the origin.

We first note that the ``if $y=Y$" condition is only possible if $Y$ is rational (since the input $y$ must be rational), and hence this condition occurs with probability 0; this condition is part of the definition of $f$ only to avoid making the entire line $y=Y$ a global minimum of $f$. We thus ignore the $y=Y$ possibility in what follows.

Fixing $x,y,X,Y$, the probability with respect to the random choice of $\theta$ that the evaluation of $f(x,y)$ falls into case 1 equals $\perr$. Since $\theta$ only affects $f(x,y)$ in case 1, if $1/\perr$ is significantly larger than the runtime of the algorithm, then with high probability the algorithm will never observe an evaluation via case 1.

Given that the algorithm never observes an evaluation via case 1, and the only dependence of $f$ on $X$ is via case 1, the algorithm will have no information about $X$, with high probability.

For any $x$ that is $\ferr$-far from $X$ (for some $\ferr$), the function $f$ will be at least $\ferr$, with probability at least $\perr$ over choices of $\theta$. Further, since regions of high function value appear every $10^{-10^{100}}$ radians about the origin, for any ball of radius $10^{-10^{99}}$ centered at $x$ and \emph{any} $\theta$, the the function $f$ will take value at least $\ferr$ with probability close to $\perr$ on a random point in the ball.

Thus, any optimization algorithm that runs in time $o(\frac{1}{\perr})$ and outputs a set that can be decomposed into $10^{-10^{99}}$-radius balls can expect the function value on its output set to be low only with $1-1/\perr$ probability. In short, we cannot expect any reasonable algorithm to return a set on which the function value is always near-optimal, or even near-optimal with probability 1. The best we can hope for is a polynomial relation between the error probability and the runtime, as we achieve in Theorem~\ref{thm:main}.

Letting the optimization algorithm specify output regions of double-exponentially small size does not help, since the set of angles for which case 1 applies could be replaced by the construction by Rudin~\cite{Rudin:1983} of a measurable set on $[0,1]$, whose intersection with \emph{any} subinterval $0<a<b<1$ has measure strictly between 0 and $ (b-a)\cdot \perr$
Furthermore, letting the optimization algorithm return rational points, instead of full-dimension sets, still does not help, since we could modify $f$ in the style of Example~\ref{ex:rationals} so that its values on all the rationals with $x$ coordinate more than $10^{-100}$ far from $X$ are high.
\end{example}

\section{Details for Adapting the Ellipsoid Algorithm}\label{ap:ellipsoid}

In this section we fill in many standard adaptations of the ellipsoid algorithm omitted from the body of Section~\ref{sec:ellipsoid}, supporting the analysis of Algorithm~\ref{alg:Ellipsoid}.

In analogy with the standard ellipsoid algorithm, the following lemma provides a means to construct a new ellipsoid from the current ellipsoid and a cutting plane.
Such scenarios can always be affinely transformed into cutting an origin-centered unit ball along a basis direction.
\begin{lemma}
\label{lem:NewEllipsoid}
Consider the unit ball $\mathbb{B}_{n-1}$ in $n$-dimensional space centered at the origin, and the half-space $\mathcal{H} = \{x \in \Real^n : x_1 \ge -\frac{1}{3n}\}$, where $x_i$ denotes the coordinate of $x$ in the $i^\text{th}$ dimension.
Then the ellipsoid $\mathcal{E}$ contains the intersection $\mathbb{B}_{n-1} \cap \mathcal{H}$, where
$$ \mathcal{E} = \left\{ x \in \Real^n : \frac{1}{(1-\frac{2}{3(n+1)})^2} \left(x_1 - \frac{2}{3(n+1)}\right)^2 + \frac{n^2-1}{n^2}\sum_{i \neq 1} x_i^2 \le 1 \right\} $$
\end{lemma}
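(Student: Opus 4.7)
The plan is to show directly that every point of $\mathbb{B}_{n-1} \cap \mathcal{H}$ satisfies the defining inequality of $\mathcal{E}$. Set $a = \frac{2}{3(n+1)}$ and $b = \frac{n^2 - 1}{n^2}$, so that $\mathcal{E} = \{x : (x_1-a)^2/(1-a)^2 + b\sum_{i\neq 1}x_i^2 \leq 1\}$. For $x \in \mathbb{B}_{n-1}$, we have $\sum_{i \neq 1} x_i^2 \leq 1 - x_1^2$, and since $b > 0$, it suffices to prove the single-variable inequality
$$\phi(x_1) := \frac{(x_1 - a)^2}{(1-a)^2} + b(1 - x_1^2) \leq 1 \qquad \text{for all } x_1 \in [-\tfrac{1}{3n},\, 1].$$

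The function $\phi$ is a quadratic in $x_1$, and the key algebraic step is to verify that its leading coefficient $\frac{1}{(1-a)^2} - b$ is positive, so that $\phi$ is convex. Using $1 - a = \frac{3n+1}{3(n+1)}$, a short calculation after a common denominator gives
$$\frac{1}{(1-a)^2} - b = \frac{9n^2(n+1)^2 - (n-1)(n+1)(3n+1)^2}{n^2(3n+1)^2} = \frac{(n+1)(12n^2 + 5n + 1)}{n^2(3n+1)^2},$$
which is manifestly positive for $n \geq 1$. Convexity of $\phi$ on $\mathbb{R}$ then reduces the problem to checking the two endpoints of the interval.

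At $x_1 = 1$ we read off $\phi(1) = 1 + 0 = 1$, giving equality (as expected, since $(1,0,\ldots,0)$ lies on the boundaries of both $\mathbb{B}_{n-1}$ and $\mathcal{E}$). At $x_1 = -\frac{1}{3n}$, one checks $(x_1 - a) = -\frac{3n+1}{3n(n+1)}$, so $(x_1-a)^2/(1-a)^2 = 1/n^2$, while $1 - x_1^2 = (9n^2-1)/(9n^2)$. Combining,
$$\phi\!\left(-\tfrac{1}{3n}\right) = \frac{1}{n^2} + \frac{(n^2-1)(9n^2-1)}{9n^4} = \frac{9n^4 - n^2 + 1}{9n^4} \leq 1,$$
the last inequality holding for all $n \geq 1$. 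Convexity together with both endpoint bounds yields $\phi \leq 1$ throughout $[-\tfrac{1}{3n},1]$, completing the proof.

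The main obstacle, such as it is, is purely mechanical: carrying out the two polynomial manipulations---the leading-coefficient computation and the endpoint evaluation at $x_1 = -1/(3n)$---without arithmetic error. Conceptually the construction is a standard ``shallow-cut'' variant of the ellipsoid step: with the ordinary cut at $x_1 = 0$, one recovers the familiar formula with center $x_1 = 1/(n+1)$, while here relaxing the cut to $x_1 = -1/(3n)$ shifts the center to $x_1 = 2/(3(n+1))$ and enlarges the ellipsoid just enough to still cover the intersection---which is all that is needed for the volume-decrease bound invoked in Lemma~\ref{lem:Volume}.
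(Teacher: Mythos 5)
Your proof is correct and takes essentially the same approach as the paper's: both reduce, via $\sum_{i\neq 1} x_i^2 \le 1 - x_1^2$, to verifying that the same single-variable quadratic is at most $1$ on $[-\frac{1}{3n},1]$. The only (inessential) difference is the final elementary step---the paper computes the two roots of $f(x_1)=1$ explicitly (namely $\frac{-(5n+1)}{12n^2+5n+1}$ and $1$) and checks the interval lies between them, whereas you verify the positive leading coefficient and evaluate the two endpoints; your algebra checks out.
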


\begin{proof}
Consider an arbitrary point $x \in \mathbb{B}_{n-1}$, then we have $\sum_{i \neq 1} x_i^2 \le 1 - x_1^2$.
Thus, it suffices to show that the function
$$ f(x_1) = \frac{1}{(1-\frac{2}{3(n+1)})^2} \left(x_1 - \frac{2}{3(n+1)}\right)^2 + \frac{n^2-1}{n^2}(1 - x_1^2) $$
satisfies $f(x_1) \le 1$ for all $x_1 \in [-\frac{1}{3n}, 1]$.

The proof is simple.
The two roots of $f(x_1) = 1$ are $\frac{-(5n+1)}{12n^2+5n+1}$ and $1$.
Also, $f'(1) > 0$, and since $f$ is quadratic in $x_1$, it must be the case that $f(x_1) \le 1$ for all $x_1 \in \left[\frac{-(5n+1)}{12n^2+5n+1}, 1\right]$.
Note that since, $\frac{1}{3n} \le \frac{5n+1}{12n^2+5n+1}$ for all $n \ge 1$, we have $f(x_1) \le 1$ for all $x \in [-\frac{1}{3n}, 1]$.
\end{proof}

We also want to upper bound the volume of the new ellipsoid relative to the original one, in order to show a fixed ratio of volume decrease at each round.
\begin{lemma}
\label{lem:Volume}
The volume of the ellipsoid constructed in Lemma \ref{lem:NewEllipsoid} is less than $e^{-\frac{1}{6(n+1)}}$ times that of the unit ball.
\end{lemma}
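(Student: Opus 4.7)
The plan is a direct computation: the ellipsoid $\mathcal{E}$ constructed in Lemma~\ref{lem:NewEllipsoid} is axis-aligned with semi-principal axes of length $a_1 = 1 - \tfrac{2}{3(n+1)}$ in the $x_1$ direction and $a_\perp = \tfrac{n}{\sqrt{n^2-1}}$ in each of the remaining $n-1$ directions. Since both the new ellipsoid and the unit ball are axis-aligned in the same basis, the ratio of their volumes equals the product of the ratios of semi-axes, giving
\[
\frac{\mathrm{Vol}(\mathcal{E})}{\mathrm{Vol}(\mathbb{B}_{n-1})} \;=\; \left(1 - \frac{2}{3(n+1)}\right)\left(\frac{n}{\sqrt{n^2-1}}\right)^{n-1}.
\]

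Next I would bound each factor by an exponential. For the first factor, the standard inequality $1-x \le e^{-x}$ yields
\[
1 - \frac{2}{3(n+1)} \;\le\; \exp\!\left(-\frac{2}{3(n+1)}\right).
\]
For the second factor, rewrite $\bigl(\tfrac{n}{\sqrt{n^2-1}}\bigr)^{n-1} = (1 - 1/n^2)^{-(n-1)/2}$ and use $-\ln(1-x) \le \tfrac{x}{1-x}$ on $[0,1)$ with $x = 1/n^2$, obtaining $-\tfrac{1}{2}\ln(1 - 1/n^2) \le \tfrac{1}{2(n^2-1)}$, hence
\[
\left(\frac{n}{\sqrt{n^2-1}}\right)^{n-1} \;\le\; \exp\!\left(\frac{n-1}{2(n^2-1)}\right) \;=\; \exp\!\left(\frac{1}{2(n+1)}\right).
\]

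Finally I would multiply the two bounds, combining the exponents over the common denominator $6(n+1)$:
\[
\frac{\mathrm{Vol}(\mathcal{E})}{\mathrm{Vol}(\mathbb{B}_{n-1})} \;\le\; \exp\!\left(-\frac{2}{3(n+1)} + \frac{1}{2(n+1)}\right) \;=\; \exp\!\left(-\frac{1}{6(n+1)}\right),
\]
which is the stated bound. There is no real obstacle here: the only subtlety is choosing the right elementary inequality for $(1-1/n^2)^{-(n-1)/2}$ so that the logarithmic slack ($\tfrac{1}{2(n+1)}$) is strictly smaller than the progress term $\tfrac{2}{3(n+1)}$ coming from the shifted axis; any bound looser than $x/(1-x)$ for $-\ln(1-x)$ would still suffice since $\tfrac{1}{2} < \tfrac{2}{3}$, so the computation is robust.
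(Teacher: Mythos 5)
Your proposal is correct and follows essentially the same route as the paper: the paper's proof is exactly the one-line computation $\left(1-\frac{2}{3(n+1)}\right)\left(\frac{n^2}{n^2-1}\right)^{(n-1)/2} < e^{-\frac{2}{3(n+1)}}e^{\frac{n-1}{2(n^2-1)}} = e^{-\frac{1}{6(n+1)}}$, and you have simply made explicit the elementary inequalities ($1-x\le e^{-x}$ and $-\ln(1-x)\le x/(1-x)$) that justify each factor. No gaps.
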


\begin{proof}
The ratio between the volumes is
$$ \left(1-\frac{2}{3(n+1)}\right)\left(\frac{n^2}{n^2-1}\right)^{\frac{n-1}{2}} < e^{-\frac{2}{3(n+1)}}e^{\frac{n-1}{2(n^2-1)}} = e^{-\frac{1}{6(n+1)}} $$
\end{proof}

The next two lemmas demonstrate the straightforward constructions required by Step 2e in Algorithm~\ref{alg:Ellipsoid} to keep all semi-principal axes of the feasible ellipsoid bounded by $3nR$, and keep the ellipsoid centered within distance $R$ of the origin.

\begin{lemma}\label{lem:resizing}
  Let $\mathcal{E}$ be an ellipsoid in $\mathbb{R}^n$ consisting of the points $x=(x_1,\ldots,x_n)$ such that $\sum_{i=1}^n (\frac{x_i-c_i}{a_i})^2\le 1$, namely with center $c$ and semi-principal axis lengths specified by the vector $a$. If the first $j$ semi-principal axes have lengths $a_i\geq 3n$ (for $i\leq j$), then replacing these axes with $n$, multiplying the remaining axes ($j+1,\ldots,n$) by $(n+1)/n$, and replacing the first $j$ elements of the center $c$ with 0 yields an ellipsoid $\mathcal{E}'$ defined by \begin{equation}\label{eq:ellipsoid-E}\sum_{i=1}^j \left(\frac{x_i}{n}\right)^2+\sum_{i=j+1}^n \left(\frac{n}{n+1}\cdot\frac{x_i-c_i}{a_i}\right)^2\le 1\end{equation}
  that has smaller volume than $\mathcal{E}$ and contains the entire intersection of $\mathcal{E}$ with the unit ball.
\end{lemma}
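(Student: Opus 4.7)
The proof splits naturally into two independent claims: containment of $\mathcal{E} \cap \mathbb{B}_n$ in $\mathcal{E}'$, and strict volume decrease. I would handle containment first, since it is where the only nontrivial arithmetic lives.

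For containment, pick an arbitrary $x \in \mathcal{E}$ with $\|x\|_2 \le 1$, and bound the left-hand side of the defining inequality of $\mathcal{E}'$ (Equation~\ref{eq:ellipsoid-E}) by bounding its two sums separately. The first sum $\sum_{i=1}^j (x_i/n)^2$ is at most $\|x\|_2^2/n^2 \le 1/n^2$, because the first $j$ coordinates of $x$ are dominated by its full norm. The second sum is at most $(n/(n+1))^2 \cdot \sum_{i=j+1}^n ((x_i-c_i)/a_i)^2$, and the latter factor is at most $1$ because of $x \in \mathcal{E}$ (dropping the nonnegative contribution from the first $j$ terms of $\mathcal{E}$'s defining inequality). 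The two pieces add to at most $\tfrac{1}{n^2} + \tfrac{n^2}{(n+1)^2}$, and a short algebraic check shows this is $\le 1$ for all $n \ge 2$; the $n = 1$ case has only one axis (which is therefore replaced), so $\mathcal{E}'$ is the unit interval and contains $\mathcal{E} \cap \mathbb{B}_1$ trivially.

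For the volume comparison, I would use the product formula for the volume of an axis-aligned ellipsoid: the ratio $\mathrm{vol}(\mathcal{E}')/\mathrm{vol}(\mathcal{E})$ equals $\prod_{i=1}^j (n/a_i) \cdot \prod_{i=j+1}^n ((n+1)/n)$, since the new semi-principal axes are $n$ for $i \le j$ and $(n+1)a_i/n$ for $i > j$. The hypothesis $a_i \ge 3n$ bounds each factor of the first product by $1/3$, so the first product is at most $(1/3)^j \le 1/3$ for $j \ge 1$. The second product is at most $((n+1)/n)^n < e$. Multiplying yields a ratio less than $e/3 < 1$.

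The only mild subtlety I anticipate is the edge case handling: one should briefly note that the lemma is only interesting when $j \ge 1$ (otherwise the ``resizing" step would enlarge rather than shrink the ellipsoid), and that the cases $j = n$ (no second sum to bound) and $n = 1$ need one-line remarks. The main arithmetic inequality $\tfrac{1}{n^2} + \tfrac{n^2}{(n+1)^2} \le 1$ for $n \ge 2$ reduces to $2n+1 \le 2n^3$, which is immediate. Nothing here is hard; the content is entirely in verifying that the two sums can be bounded independently, exploiting once that $x$ lies in the unit ball and once that $x$ lies in $\mathcal{E}$.
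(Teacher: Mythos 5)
Your proof is correct and the containment argument is essentially identical to the paper's: bound the two sums of Equation~\ref{eq:ellipsoid-E} separately, using $\|x\|_2\le 1$ for the first $j$ coordinates and $x\in\mathcal{E}$ for the rest, and check $\frac{1}{n^2}+\frac{n^2}{(n+1)^2}\le 1$. Your explicit volume computation (ratio $\le (1/3)^j\cdot((n+1)/n)^{n-j}<e/3<1$ for $j\ge 1$) and the edge-case remarks are correct details that the paper's proof leaves implicit.
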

\begin{proof}
  Let $\mathcal{E}_{>j}$ be the $(n-j)$-dimensional ellipsoid, defined as $\mathcal{E}$ restricted to dimensions $>j$, namely, points $x_{>j}=(x_{j+1},\ldots,x_n)$ satisfying $\sum_{i=j+1}^n (\frac{x_i-c_i}{a_i})^2\le 1$, and let $\mathbb{B}_j$ be the $j$-dimensional unit ball centered at the origin. Then the intersection of $\mathcal{E}$ with the $n$-dimensional unit ball is contained in the cartesian product $\mathbb{B}_j\times \mathcal{E}_{>j}$. We now show that $\mathcal{E}'$ contains this cartesian product, by showing that points in this cartesian product satisfy Equation~\ref{eq:ellipsoid-E}.

  Consider a point $x\in \mathbb{B}_j\times \mathcal{E}_{>j}$. The first $j$ coordinates of $x$, corresponding to $\mathbb{B}_j$, satisfy $\sum_{i=1}^j x_i^2\leq 1$, and thus contribute to Equation~\ref{eq:ellipsoid-E} at most $\sum_{i=1}^j (\frac{x_i}{n})^2\leq \frac{1}{n^2}$. The remaining coordinates of $x$, corresponding to $\mathcal{E}_{>j}$, satisfy $\sum_{i=j+1}^n (\frac{x_i-c_i}{a_i})^2\le 1$ and thus contribute to Equation~\ref{eq:ellipsoid-E} at most $\sum_{i=j+1}^n (\frac{n}{n+1}\frac{x_i-c_i}{a_i})^2\le (\frac{n}{n+1})^2$. Adding these two bounds yields that, for points $x\in \mathbb{B}_j\times \mathcal{E}_{>j}$, the left hand side of Equation~\ref{eq:ellipsoid-E} is at most $\frac{1}{n^2}+(\frac{n}{n+1})^2\leq 1$, as desired, proving the lemma.
\end{proof}

\begin{lemma}\label{lem:recentering}
  The intersection of a ball $\mathcal{B}$ and an ellipsoid $\mathcal{E}$ is contained within the intersection of the ellipsoid and a version of the ball translated to be centered at the nearest point of $\mathcal{E}$ to the center of $\mathcal{B}$.
\end{lemma}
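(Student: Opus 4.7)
The plan is to reduce the claim to the standard projection fact for convex sets and then to a short direct calculation with norms. Let $c$ denote the center and $r$ the radius of the ball $\mathcal{B}$, and let $p$ be the nearest point of $\mathcal{E}$ to $c$ (well-defined and unique since $\mathcal{E}$ is a closed convex set). Let $\mathcal{B}'$ be the ball of radius $r$ centered at $p$. Since $\mathcal{B}\cap\mathcal{E} \subseteq \mathcal{E}$ trivially, it suffices to prove $\mathcal{B}\cap\mathcal{E}\subseteq\mathcal{B}'$, i.e., for every $x\in\mathcal{B}\cap\mathcal{E}$, show $\|x-p\|_2 \le r$.

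First I would handle the easy case: if $c\in\mathcal{E}$, then $p=c$ and the claim is immediate. So assume $c\notin\mathcal{E}$. Then the standard variational characterization of the metric projection onto a closed convex set yields the separating hyperplane inequality
\[
(x-p)\cdot(c-p)\le 0 \quad\text{for every } x\in\mathcal{E}.
\]

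Next, for any $x\in\mathcal{B}\cap\mathcal{E}$ I would expand
\[
\|x-c\|_2^2 = \|(x-p)+(p-c)\|_2^2 = \|x-p\|_2^2 - 2(x-p)\cdot(c-p) + \|p-c\|_2^2.
\]
By the projection inequality, the middle term is non-negative, so $\|x-c\|_2^2 \ge \|x-p\|_2^2$. Combining with $x\in\mathcal{B}$, which gives $\|x-c\|_2\le r$, yields $\|x-p\|_2\le r$, so $x\in\mathcal{B}'$, completing the proof.

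There is essentially no hard step here; the only thing to be slightly careful about is invoking the projection inequality, which requires convexity (and closedness) of $\mathcal{E}$ — both automatic for a (solid) ellipsoid. No computation specific to the ellipsoidal shape of $\mathcal{E}$ is needed; the result holds for any closed convex set in place of $\mathcal{E}$.
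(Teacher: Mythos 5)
Your proof is correct. The paper states Lemma~\ref{lem:recentering} without proof (treating it as routine), and your argument---projecting the ball's center $c$ onto the closed convex set $\mathcal{E}$ to get $p$, invoking the variational inequality $(x-p)\cdot(c-p)\le 0$ for $x\in\mathcal{E}$, and expanding $\|x-c\|_2^2=\|x-p\|_2^2-2(x-p)\cdot(c-p)+\|p-c\|_2^2\ge\|x-p\|_2^2$---is the natural and complete way to fill that gap. Your closing remark is also right: nothing about the ellipsoidal shape is used, so the statement holds verbatim with any closed convex set in place of $\mathcal{E}$.
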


Lemma~\ref{lem:recentering}, interpreted under an affine transformation, and combined with Lemma~\ref{lem:resizing}, shows that we can run the ellipsoid method with every ellipsoid guaranteed to have axes bounded by $3nR$, and always centered inside the ball of radius $R$.

The following two lemmas give important invariants that the algorithm maintains throughout the iterations.

\begin{lemma}
The global minimum is contained in the ellipsoid $\mathcal{E}_i$ for all $i$.
\end{lemma}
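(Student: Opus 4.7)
The plan is to prove this by induction on $i$, tracking what happens to the star center $x^\ast$ at each operation of Algorithm~\ref{alg:Ellipsoid}. The base case is immediate: $\mathcal{E}_1$ is the input radius-$R$ ball, which by hypothesis contains $x^\ast$.

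For the inductive step, suppose $x^\ast \in \mathcal{E}_i$. If Step 2a or Step 2c fires, the algorithm halts and there is no $\mathcal{E}_{i+1}$ to check. Otherwise, Step 2b invokes Algorithm~\ref{alg:SingleCut} on (an affine transformation of) $\mathcal{E}_i$. By Proposition~\ref{prop:Alg1Correctness}, with all but negligible failure probability the returned cut direction $\hat{\vec{d}}_\bot$ has the property that the half-space $\{x : x \cdot \hat{\vec{d}}_\bot \le \tfrac{1}{3n}\}$ contains the star center in the transformed coordinates. Combining this with the inductive hypothesis (which transports under the affine transformation), the intersection of $\mathcal{E}_i$ with this half-space contains $x^\ast$.

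Step 2d then uses Lemma~\ref{lem:NewEllipsoid} to produce an ellipsoid $\mathcal{E}_{i+1}$ that is guaranteed to contain this intersection; hence $x^\ast \in \mathcal{E}_{i+1}$. Finally, Step 2e may modify $\mathcal{E}_{i+1}$ via Lemmas~\ref{lem:resizing} and~\ref{lem:recentering}, but these lemmas are designed precisely so that the new ellipsoid contains the entire intersection of the old one with the radius-$R$ ball about the origin. Since $x^\ast$ lies within distance $R$ of the origin by assumption and (by inductive hypothesis) also lies in $\mathcal{E}_{i+1}$, it lies in this intersection, hence remains in the post-Step-2e ellipsoid. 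Untransforming back to the original coordinates preserves this containment.

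The only obstacle is that the cut guarantee from Proposition~\ref{prop:Alg1Correctness} is probabilistic rather than deterministic; accordingly, the lemma should be read as holding with high probability (union-bounding the $\poly(n,\ldots)$ many iterations yields a negligible total failure probability absorbed into the overall failure budget $F$ of Theorem~\ref{thm:main}). Modulo that standard caveat, every other step is an exact containment, so the induction goes through routinely.
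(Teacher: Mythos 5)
Your proof is correct and follows essentially the same route as the paper's: induction on $i$, using the cut guarantee of Proposition~\ref{prop:Alg1Correctness} together with Lemma~\ref{lem:NewEllipsoid} for the containment of the intersection. You are in fact slightly more thorough than the paper's own proof, which omits explicit mention of Step 2e (covered by Lemmas~\ref{lem:resizing} and~\ref{lem:recentering}, as you note) and of the probabilistic caveat.
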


\begin{proof}
We prove by induction.
The base case is trivial.

For the inductive case, assume the lemma is true for the ellipsoid $\mathcal{E}_i$.
Apply an affine transformation such that the $\mathcal{E}_i$ is a unit ball.
By assumption, the cut $\vec{d}_\bot$ satisfies the property that, taking into account the affine transformation, the half-space $\mathcal{H} = \{x \in \Real^n : x\cdot\hat{\vec{d}}_\bot \ge -\frac{1}{3n}\}$ contains the global minimum.
Since the global minimum is also contained in $\mathcal{E}_i$ by assumption, it must be contained in $\mathcal{E}_{i+1}$ by Lemma \ref{lem:NewEllipsoid}.
\end{proof}

\begin{lemma}
\label{lem:AxesLB}
No semi-principal axis of any ellipsoid $\mathcal{E}_i$ is ever less than $\left(\frac{1+\frac{1}{3n}}{2}\right)\tau$.
\end{lemma}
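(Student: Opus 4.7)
\emph{Proof plan.} I will prove the lemma by induction on $i$, establishing that every semi-principal axis of $\mathcal{E}_i$ has length at least $\frac{3n+1}{6n}\tau$. The base case $i=1$ is trivial, as $\mathcal{E}_1$ is a ball of radius $R\gg\tau$.

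For the inductive step, fix an orthonormal basis $\{f_1,\ldots,f_n\}$ aligned with the axes of $\mathcal{E}_i$, with axis lengths $b_1,\ldots,b_n$, and partition the indices into ``thin'' ($b_j\le\tau$) and ``non-thin'' ($b_j>\tau$) sets. By Proposition~\ref{prop:Alg1Correctness}, the cut direction $\hat{\vec{d}}_\bot$ is supported on the non-thin subspace. The plan has two parts: (a) the quadratic form matrix $Q$ of $\mathcal{E}_{i+1}$, expressed in this basis, is block-diagonal with respect to the thin/non-thin partition; and (b) within each block, no axis length drops below $\frac{3n+1}{6n}\tau$.

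For (a): In the fully-normalized frame where $\mathcal{E}_i$ is the unit ball (via the diagonal scaling $T=\mathrm{diag}(1/b_j)$), Lemma~\ref{lem:NewEllipsoid} yields a cut ellipsoid with quadratic form $M=\frac{n^2-1}{n^2}I+\alpha\,\hat{\vec{d}}_\bot\hat{\vec{d}}_\bot^{\top}$ for the explicit constant $\alpha=\left(\frac{3n+3}{3n+1}\right)^2-\frac{n^2-1}{n^2}>0$. Pulling back to the basis $\{f_j\}$ gives $Q=TMT=\frac{n^2-1}{n^2}T^2+\alpha\,(T\hat{\vec{d}}_\bot)(T\hat{\vec{d}}_\bot)^{\top}$. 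The first term is diagonal, and the rank-one second term is supported entirely on the non-thin block since $\hat{\vec{d}}_\bot$ is supported there and $T$ is diagonal. Hence $Q$ is block-diagonal, and the axes of $\mathcal{E}_{i+1}$ split cleanly into axes within the non-thin and thin subspaces.

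For (b): The thin block of $Q$ is diagonal with entries $\frac{n^2-1}{n^2 b_j^2}$, giving axis lengths $\frac{n}{\sqrt{n^2-1}}\,b_j\ge b_j\ge\frac{3n+1}{6n}\tau$ by the inductive hypothesis. For the non-thin block, a Cauchy--Schwarz estimate on the rank-one term shows $v^{\top}Q_\bot v\le\frac{1}{\min_{\bot}b_j^2}\bigl(\frac{n^2-1}{n^2}+\alpha\bigr)=\left(\frac{3n+3}{3n+1}\right)^2/\min_{j\text{ non-thin}}b_j^2$ for every unit $v$, so the smallest axis in the non-thin subspace is at least $\frac{3n+1}{3n+3}\min_{j\text{ non-thin}}b_j>\frac{3n+1}{3n+3}\tau\ge\frac{3n+1}{6n}\tau$ (the last inequality using $6n\ge 3n+3$ for $n\ge 1$). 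Finally, the resizing/recentering in Step~2e only shrinks axes that already exceed $3nR\gg\tau$ (to length $nR\gg\tau$) and grows the rest, so the invariant is preserved.

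\emph{Main obstacle.} The technical heart is establishing the block-diagonal structure (a): this is what guarantees thin axes genuinely do not shrink during a cut, which is the qualitative fact making the ``locking'' strategy correct. Once (a) is in hand, (b) reduces to an essentially one-line spectral computation, with the Cauchy--Schwarz step conveniently collapsing the two contributions into exactly the Lemma~\ref{lem:NewEllipsoid} bound $\left(\frac{3n+3}{3n+1}\right)^2$, which is tight.
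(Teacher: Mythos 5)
Your proof is correct and follows essentially the same route as the paper's: induction on $i$, splitting the axes of $\mathcal{E}_{i+1}$ according to the thin/non-thin partition of $\mathcal{E}_i$, using that the cut direction lies in the non-thin subspace so the thin axes survive as axes and do not shrink (indeed they grow by $n/\sqrt{n^2-1}$), and bounding the shrinkage of the non-thin axes by the factor $\frac{3n+1}{3n+3}$ coming from Lemma~\ref{lem:NewEllipsoid}, which dominates $\frac{3n+1}{6n}$. The only difference is one of explicitness: you derive the block-diagonality of $Q=TMT$ and the non-thin eigenvalue bound by direct computation where the paper asserts the axis-preservation ``by construction'' and argues the non-thin case via widths, and you additionally account for the Step~2e resizing, which the paper's proof does not mention.
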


\begin{proof}
We prove by induction.
The base case is again trivial.

For the inductive case, assume the lemma is true for the ellipsoid $\mathcal{E}_i$.
For simplicity, when we refer to thin (less than $\tau$) and non-thin (at least $\tau$) directions, they are always with respect to the lengths of the axes of $\mathcal{E}_i$.
Observe that, by construction, all the thin directions in the axes of $\mathcal{E}_i$ are also axes directions of $\mathcal{E}_{i+1}$.
Also, by Step 2(c) in Algorithm \ref{alg:Ellipsoid}, $\mathcal{E}_{i+1}$ always contain the center of $\mathcal{E}_i$.
Therefore, the semi-principal axes of $\mathcal{E}_{i+1}$ in the thin directions must be longer than that of $\mathcal{E}_i$, which is at least the quantity in the lemma statement by the induction hypothesis.
Now consider an arbitrary axis $\vec{v}$ of $\mathcal{E}_{i+1}$ that is not in any of the thin directions.
The direction of this axis is a linear combination of the non-thin directions in the axes of $\mathcal{E}_i$.
Therefore, the diameter of $\mathcal{E}_i$ in the direction of $\vec{v}$ is at least $2\tau$.
By construction of the cut, $2||\vec{v}||_2 \ge (1+\frac{1}{3n})\tau$, and so
$||\vec{v}||_2 \ge \left(\frac{1+\frac{1}{3n}}{2}\right)\tau$,
completing the proof of the lemma.
\end{proof}

We restate and prove Lemma~\ref{lem:EllipsoidHalt}, showing that Algorithm~\ref{alg:Ellipsoid} halts after the designated number of iterations.
\begin{lemma*}[Lemma~\ref{lem:EllipsoidHalt}]
Algorithm~\ref{alg:Ellipsoid} halts within $m+1$ iterations either through Step 2a or Step 2c.
\end{lemma*}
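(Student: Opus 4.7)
My plan is to argue by contradiction: suppose that neither Step 2a nor Step 2c triggers during any of the iterations $i = 1, 2, \ldots, m+1$. Then at each such iteration, the algorithm must execute Step 2d (and possibly Step 2e) and produce a new ellipsoid $\mathcal{E}_{i+1}$. I will derive a volume upper bound on $\mathcal{E}_{m+1}$ from this assumption, a volume lower bound from the ``Step 2a did not trigger'' hypothesis combined with Lemma~\ref{lem:AxesLB}, and show that the chosen value of $m$ makes these two bounds inconsistent.

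\textbf{Upper bound from successive cuts.} Each application of Step 2d invokes Lemma~\ref{lem:NewEllipsoid} after an affine transformation that sends $\mathcal{E}_i$ to the unit ball; the volume ratio introduced by the cut is therefore bounded (strictly) by $e^{-1/(6(n+1))}$ via Lemma~\ref{lem:Volume}, and the same bound holds in the original coordinates since affine transformations preserve ratios of volumes. Step 2e applies only Lemmas~\ref{lem:resizing} and~\ref{lem:recentering}, both of which produce an ellipsoid of \emph{smaller} volume while still containing the intersection of the previous ellipsoid with the radius-$R$ feasible ball. Hence after $m$ iterations we obtain the strict inequality
\[
\mathrm{vol}(\mathcal{E}_{m+1}) < \mathrm{vol}(\mathcal{E}_1) \cdot e^{-m/(6(n+1))} = c_n R^n \cdot e^{-m/(6(n+1))},
\]
where $c_n$ is the volume of the unit ball in $\mathbb{R}^n$.

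\textbf{Lower bound from the axis guarantee.} Because Step 2a does not fire at iteration $m+1$, at least one semi-principal axis of $\mathcal{E}_{m+1}$ has length at least $\tau$. By Lemma~\ref{lem:AxesLB}, every other semi-principal axis has length at least $\left(\frac{1+\frac{1}{3n}}{2}\right)\tau$. Multiplying gives
\[
\mathrm{vol}(\mathcal{E}_{m+1}) \ge c_n \, \tau \cdot \left(\left(\tfrac{1+\frac{1}{3n}}{2}\right)\tau\right)^{n-1} = c_n \, \tau^n \left(\tfrac{1+\frac{1}{3n}}{2}\right)^{n-1}.
\]

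\textbf{Combining and concluding.} Plugging in $m = 6(n+1)\bigl[n\log(R/\tau) - (n-1)\log\frac{1+1/(3n)}{2}\bigr]$, a direct logarithmic calculation shows $c_n R^n \cdot e^{-m/(6(n+1))} = c_n \tau^n \left(\tfrac{1+\frac{1}{3n}}{2}\right)^{n-1}$, so the upper bound collapses onto the lower bound. Since the upper bound is \emph{strict} (Lemma~\ref{lem:Volume} provides strict inequality and at least one genuine cut has occurred), this is a contradiction. Thus within iterations $i = 1, \ldots, m+1$ either Step 2c triggers (returning a Gaussian) or, by iteration $m+1$ at the latest, all semi-principal axes of $\mathcal{E}_{m+1}$ have fallen below $\tau$ and Step 2a triggers. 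The only subtlety worth double-checking is that Step 2e does not violate Lemma~\ref{lem:AxesLB}: the resizing in Lemma~\ref{lem:resizing} only shortens axes that exceed $3nR \gg \tau$ (to length $nR \gg \tau$) and lengthens the remaining axes, while the recentering in Lemma~\ref{lem:recentering} does not change axis lengths at all, so the invariant is preserved throughout.
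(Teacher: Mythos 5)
Your proof is correct and follows essentially the same route as the paper: combine the per-iteration volume decrease of Lemma~\ref{lem:Volume} with the axis lower bound of Lemma~\ref{lem:AxesLB} and the definition of $m$ to force every semi-principal axis below $\tau$ by iteration $m+1$. The only differences are cosmetic—you phrase it as a contradiction and explicitly note the strictness of the volume inequality and the harmlessness of Step 2e, points the paper's proof leaves implicit.
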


\begin{proof}
By Lemma~\ref{lem:Volume} and induction, after $m$ iterations, the volume of ellipsoid $\mathcal{E}_m$ is at most $\left(\left(\frac{1+\frac{1}{3n}}{2}\right)\tau\right)^{n-1}\tau$ times the volume of the input radius $R$ ball.
Since, by Lemma~\ref{lem:AxesLB}, all its semi-principal axes have length at least $\left(\frac{1+\frac{1}{3n}}{2}\right)\tau$, all the axes must also be of length at most $\tau$.

Therefore, the algorithm will definitely halt in Step 2a of iteration $m+1$, if it has not already done so in the previous iterations.
\end{proof}

\end{document}